\newcommand{\includeapp}{}
\newcommand*{\appref}[1]{%
\ifdefined\includeapp% in case we want to 
Appendix~\ref{app:#1}%
\else%
the appendix of the extended paper \cite{fu16}%
\fi%
}
\newcommand*{\etal}{et al.\@\xspace}
\newcommand*{\bnf}{\mathrel{::=}}
\newcommand*{\nodeMatch}{\mathsf{nodeMatch}}
\newcommand*{\caseMatch}{\mathsf{caseMatch}}
\newcommand*{\ClassDef}{\mathsf{ClassDef}}
\newcommand*{\ResourceDef}{\mathsf{ResourceDef}}
\newcommand*{\Declared}{\mathsf{DeclaredClass}}
\newcommand*{\RV}{v_\mathrm{R}}
\newcommand*{\HV}{v_\mathrm{H}}
\newcommand*{\CV}{v_\mathrm{C}}
\newcommand*{\bang}{\mathtt{!}}
\newcommand*{\query}{\mathtt{?}}
\newcommand*{\dom}{\mathit{dom}}
\newcommand*{\Set}[1]{\mathit{#1}}
\newcommand*{\baseof}[1]{\mathbin{\mathsf{baseof}_{#1}}}
\newcommand*{\parentof}[1]{\mathbin{\mathsf{parentof}_{#1}}}
\newcommand*{\stepExpr}[1]{\xrightarrow{#1}}
\newcommand*{\stepStmt}[1]{\xrightarrow{#1}_{\mathsf{s}}}
\newcommand*{\stepManifest}[1]{\xrightarrow{#1}_{\mathsf{m}}}
\newcommand*{\stepHash}[1]{\xrightarrow{#1}_{\mathsf{H}}}
\newcommand*{\stepRes}[1]{\xrightarrow{#1}_{\mathsf{R}}}
\newcommand*{\visiblespace}{\text{\textvisiblespace}}
\newcommand*{\seq}{\visiblespace}
\newcommand*{\clear}{\mathit{clear}}
\newcommand*{\lookup}{\mathit{lookup}}
\newcommand*{\lookupCat}{\mathit{lookup_\mathrm{C}}}
\newcommand*{\merge}{\mathit{merge}}
\newcommand*{\Not}[1]{\ensuremath{{\bang}\,#1}}
\newcommand{\Paragraph}[1]{\textbf{#1}.}
\newcommand{\Yes}{\checkmark}
\newcommand{\No}{\ding{55}}
\newcommand*{\kw}[1]{{\tt #1}\xspace}
\newcommand*{\kcase}{\kw{case}}
\newcommand*{\kclass}{\kw{class}}
\newcommand*{\kdefine}{\kw{define}}
\newcommand*{\kdefault}{\kw{default}}
\newcommand*{\kelse}{\kw{else}}
\newcommand*{\kelseif}{\kw{elsif}}
\newcommand*{\kfalse}{\kw{false}}
\newcommand*{\kif}{\kw{if}}
\newcommand*{\kinclude}{\kw{include}}
\newcommand*{\kinherits}{\kw{inherits}}
\newcommand*{\knode}{\kw{node}}
\newcommand*{\kscope}{\kw{scope}}
\newcommand*{\kskip}{\kw{skip}}
\newcommand*{\ktrue}{\kw{true}}
\newcommand*{\kunless}{\kw{unless}}
\newcommand*{\kand}{\mathbin{\kw{and}}}
\newcommand*{\kor}{\mathbin{\kw{or}}}
\newcommand*{\scopeSep}{{::}}
\newcommand*{\topScope}{{::}}
\newcommand*{\classScope}[1]{{::}#1}
\newcommand*{\nodeScope}{{::}\kw{nd}}
\newcommand*{\defScope}[1]{#1\;\kw{def}}
\newcommand*{\fupdate}[4]{#1[(#2,#3):#4]}
\newcommand*{\defupdate}[3]{#1[#2:#3]}
\newcommand*{\muPuppet}{$\mu$Puppet\xspace}
\definecolor{bgcolor}{rgb}{0.97,0.97,0.97}
\definecolor{gray}{rgb}{0.5,0.5,0.5}
\newcommand{\puppetversion}{4.8\xspace}
\title{\muPuppet: A Declarative Subset of the Puppet Configuration Language}
\author{Weili Fu \and Roly Perera \and Paul Anderson \and James Cheney}
\institute{Laboratory for Foundations of Computer Science, School of Informatics, University of Edinburgh}
\date{}
\begin{document}

\maketitle

\begin{abstract}
  Puppet is a popular declarative framework for specifying and managing
  complex system configurations. The Puppet framework includes a
  domain-specific language with several advanced features inspired by
  object-oriented programming, including user-defined resource types,
  `classes' with a form of inheritance, and dependency management. Like most
  real-world languages, the language has evolved in an ad hoc fashion,
  resulting in a design with numerous features, some of which are complex,
  hard to understand, and difficult to use correctly.

  We present an operational semantics for \muPuppet, a representative subset of the
  Puppet language that covers the distinctive features of Puppet, while
  excluding features that are either deprecated or work-in-progress.
  Formalising the semantics sheds light on difficult parts of the language,
  identifies opportunities for future improvements, and provides a foundation
  for future analysis or debugging techniques, such as static typechecking or
  provenance tracking. Our semantics leads straightforwardly to a reference
  implementation in Haskell. We also discuss some of Puppet's idiosyncrasies,
  particularly its handling of classes and scope, and present an initial
  corpus of test cases supported by our formal semantics.
\end{abstract}

\section{Introduction}

Managing a large-scale data center consisting of hundreds or thousands of
machines is a major challenge. Manual installation and configuration is simply
impractical, given that each machine hosts numerous software components, such
as databases, web servers, and middleware. Hand-coded configuration scripts
are difficult to manage and debug when multiple target configurations are
needed. Moreover, misconfigurations can potentially affect millions of users.
Recent empirical studies~\cite{yin11sosp,gunawi14socc} attribute a significant
proportion of system failures to misconfiguration rather than bugs in the
software itself. Thus better support for specifying, debugging and verifying
software configurations is essential to future improvements in
reliability~\cite{xu15cs}.

A variety of \emph{configuration frameworks} have been developed to increase
the level of automation and reliability. All lie somewhere on the spectrum
between ``imperative'' and ``declarative''. At the imperative end, developers
use conventional scripting languages to automate common tasks. It is left to
the developer to make sure that steps are performed in the right order, and
that any unnecessary tasks are not (potentially harmfully) executed anyway. At
the declarative end of the spectrum, the desired system configuration is
\emph{specified} in some higher-level way and it is up to the configuration
framework to determine how to \emph{realise} the specification: that is, how
to generate a compliant configuration, or adapt an already-configured
system to match a new desired specification.

Most existing frameworks have both imperative and declarative aspects.
Chef~\cite{marschall13}, CFEngine~\cite{zamboni12}, and
Ansible~\cite{geerling15} are imperative in relation to dependency management;
the order in which tasks are run must be specified. Chef and CFEngine are
declarative in that a configuration is specified as a desired target state,
and only the actions necessary to end up in a compliant state are executed.
(This is called \emph{convergence} in configuration management speak.) The
Puppet framework~\cite{puppet} lies more towards the declarative end, in
that the order in which configuration tasks are carried out is also left
mostly to the framework. Puppet also provides a self-contained
\emph{configuration language} in which specifications are written, in contrast
to some other systems. (Chef specifications are written in Ruby, for example,
whereas Ansible is YAML-based.)

Configuration languages often have features in common with general-purpose
programming languages, such as variables, expressions, assignment, and
conditionals. Some, including Puppet, also include ``object-oriented''
features such as classes and inheritance. However, (declarative) configuration
languages differ from regular programming or scripting languages in that they
mainly provide mechanisms for specifying, rather than realising,
configurations. While some ``imperative'' features that can directly mutate
system state are available in Puppet, their use is generally discouraged.

Like most real-world languages, configuration languages have largely evolved
in an ad hoc fashion, with little attention paid to their semantics. Given
their infrastructural significance, this makes them an important (although
challenging) target for formal study: a formal model can clarify difficult or
counterintuitive aspects of the language, identify opportunities for
improvements and bug-fixes, and provide a foundation for static or dynamic
analysis techniques, such as typechecking, provenance tracking and execution
monitoring. In this paper, we investigate the semantics of the configuration
language used by the Puppet framework. Puppet is a natural choice because of
its DSL-based approach, and the fact that it has seen widespread adoption. The
2016 PuppetConf conference attracted over 1700 Puppet users and developers and
sponsorship from over 30 companies, including Cisco, Dell,
Microsoft, Google, Amazon, RedHat, VMWare, and Citrix.

An additional challenge for the formalisation of real-world languages is that
they tend to be moving targets. For example, Puppet 4.0, released in March
2015, introduced several changes that are not backwards-compatible with Puppet
3, along with a number of non-trivial new features. In this paper, we take
Puppet \puppetversion (the version included with Puppet Enterprise 2016.5) as
the baseline version of the language, and define a subset called \muPuppet
(pronounced ``muppet'') that includes the established features of the language
that appear most important and distinctive; in particular, it includes the
constructs \verb|node|, \verb|class|, and \verb|define|. These are used in
almost all Puppet programs (called
\emph{manifests}). We chose to exclude some features that are either
deprecated or not yet in widespread use, or whose formalisation would add
complication without being particularly enlightening, such as regular
expressions and string interpolation.

The main contributions of this paper are:

\begin{enumerate}
\item a formalisation of \muPuppet, a subset of Puppet \puppetversion;
\item a discussion of simple metatheoretic properties of \muPuppet
  such as determinism, monotonicity and (non-)termination;
\item a reference implementation of \muPuppet in Haskell;
\item a corpus of test cases accepted by our implementation;
\item a discussion of the more complex features not handled by \muPuppet.
\end{enumerate}

We first give an overview of the language via some examples
(Section~\ref{sec:overview}), covering some of the more
counterintuitive and surprising parts of the language. Next we define
the abstract syntax and a small-step operational semantics of
\muPuppet (Section~\ref{sec:language}). We believe ours to be the
first formal semantics a representative subset of Puppet; although
recent work by Shambaugh et al.~\cite{shambaugh16pldi} handles some
features of Puppet, they focus on analysis of the ``realisation''
phase and do not present a semantics for the \verb|node| or
\verb|class| constructs or for inheritance (although their
implementation does handle some of these features).  We use a small-step
operational semantics (as opposed to large-step or denotational semantics)
because it is better suited to modelling some of the idiosyncratic aspects of
Puppet, particularly the sensitivity of scoping to evaluation order. We focus
on unusual or novel aspects of the language in the main body of the paper; the
full set of rules are given in \appref{rules}. Section~\ref{sec:metatheory}
discusses some properties of \muPuppet, such as determinism and monotonicity,
that justify calling it a `declarative' subset of Puppet.
Section~\ref{sec:impl} describes our implementation and how we validated our
rules against the actual behaviour of Puppet, and discusses some of the omitted
features. Sections~\ref{sec:related} and~\ref{sec:concl} discuss related work
and present our conclusions.

\section{Overview of Puppet}
\label{sec:overview}

\begin{figure}[tb]
\centering
  \includegraphics[scale=0.5]{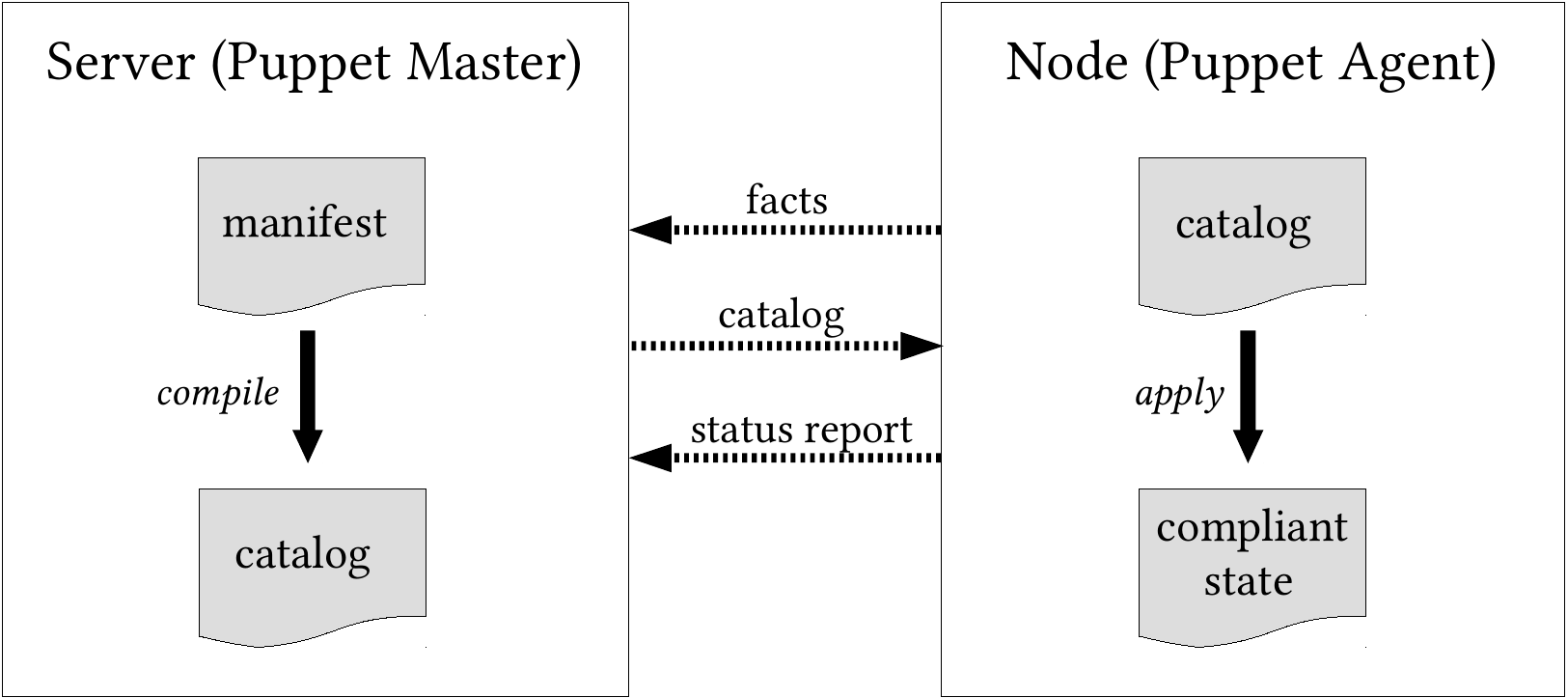}
  \caption{Puppet overview}
  \label{fig:overview}
\end{figure}

Puppet uses several terms -- especially \emph{compile}, \emph{declare}, and
\emph{class} -- in ways that differ from standard usage in programming
languages and semantics.  We introduce these terms with their Puppet meanings
in this section, and use those meanings for the rest of the paper. To aid the
reader, we include a glossary of Puppet terms in \appref{glossary}.

The basic workflow for configuring a single machine (\emph{node})
using Puppet is shown in Figure~\ref{fig:overview}.  A \emph{Puppet
  agent} running on the node to be configured contacts the
\emph{Puppet master} running on a server, and sends a check-in request
containing local information, technically called \emph{facts}, such as
the name of the operating system running on the client node. Using
this information, along with a centrally maintained configuration
specification called the \emph{manifest}, the Puppet master
\emph{compiles} a \emph{catalog} specific to that node. The manifest
is written in a high-level language, the Puppet programming language
(often referred to simply as Puppet), and consists of
\emph{declarations} of \emph{resources}, along with other program
constructs used to define resources and specify how they are assigned
to nodes. A resource is simply a collection of key-value pairs, along
with a \emph{title}, of a particular \emph{resource type};
``declaring'' a resource means specifying that a resource of that type
exists in the target configuration. The catalog resulting from
compilation is the set of resources computed for the target node,
along with other metadata such as ordering information among
resources. The Puppet master may fail to compile a manifest due to
compilation errors.  In this case, it will not produce a compiled
catalog.  If compilation succeeds, the agent receives the compiled
catalog and \emph{applies} it to reconfigure the client machine,
ideally producing a compliant state. Puppet separates the compilation
of manifests and the deployment of catalogs. After deploying the
catalog, either the changed configuration meets the desired
configuration or there are some errors in it that cause system
failures. Finally, the agent sends a status report back to the master
indicating success or failure.

Figure~\ref{fig:overview} depicts the interaction between a single agent and
master. In a large-scale system, there may be hundreds or thousands of nodes
configured by a single master. The manifest can describe how to configure all
of the machines in the system, and parameters that need to be coordinated
among machines can be specified in one place. A given run of the Puppet
manifest compiler considers only a single node at a time.

\subsection{Puppet: key concepts}

We now introduce the basic concepts of the Puppet language -- manifests,
catalogs, resources, and classes -- with reference to various examples. We
also discuss some behaviours which may seem surprising or unintuitive;
clarifying such issues is one reason for pursuing a formal definition of the
language. The full Puppet
\puppetversion language has many more features than presented here.
A complete list of features and the subset supported by \muPuppet are given in
\appref{features}.

\subsubsection{Manifests and catalogs}

Figure~\ref{ex1} shows a typical manifest, consisting of a \emph{node
definition} and various \emph{classes} declaring resources, which will be
explained in \S~\ref{sec:overview:classes} below. Node definitions, such as
the one starting on line~\ref{ex1:node}, specify how a single machine or group
of machines should be configured. Single machines can be specified by giving a
single hostname, and groups of machines by giving a list of hostnames, a
regular expression, or
\kdefault (as in this example). The \kdefault node definition is used if no
other definition applies.

In this case the only node definition is \kdefault, and so compiling this
manifest for any node results in the catalog on the right of Figure~\ref{ex1}.
In this case the catalog is a set of resources of type \kw{file} with titles
\kw{config1},
\kw{config2} and \kw{config3}, each with a collection of attribute-value
pairs. Puppet supports several persistence formats for catalogs, including
YAML; here we present the catalog using an abstract syntax which is
essentially a sub-language of the language of manifests. The
\kw{file} resource type is one of Puppet's many built-in resource types, which
include other common configuration management concepts such as
\verb|user|, \verb|service| and \verb|package|.

\subsubsection{Resource declarations}
\label{sec:overview:resources}

Line~\ref{ex1:config1} of the manifest in Figure~\ref{ex1} shows how the
\kw{config1} resource in the catalog was originally declared. The \kw{path}
attribute was specified explicitly as a string literal; the other attributes
were given as variable references of the form
\kw{\$$x$}. Since a resource with a given title and type is global to the
entire catalog, it may be declared only once during a given compilation. A
\emph{compilation error} results if a given resource is declared more than
once. Note that what Puppet calls a ``compilation error'' is a purely dynamic
condition, and so is really a runtime error in conventional terms.

The ordering of attributes within a resource is not significant; by default
they appear in the catalog in the order in which they were declared.
Optionally they can be sorted (by specifying ordering constraints) or
randomised. Sorting is usually recommended over relying on declaration
order~\cite{rhett16}.

% ** DO NOT EDIT ** - generated from ex1.pp by puppet-compile
\begin{figure}[!tb]
\begin{minipage}{.55\linewidth}
\begin{lstlisting}
node default {^\plab{ex1:node}^
  $source = '/source'
  include service1
}

class service1 {
  $mode = 123^\plab{ex1:var1}^

  include service2

  file { 'config1':^\plab{ex1:config1}^
    path => 'path1',
    source => $source,
    mode => $mode,
    checksum => $checksum,
    provider => $provider,
    recurse => $recurse
  }

  $checksum = md5^\plab{ex1:var2}^
}

class service2^\plab{ex1:service2}^ inherits service3 {
  $recurse = true^\plab{ex1:B}^

  file { 'config2':^\plab{ex1:config2}^
    path => 'path2',
    source => $source,
    mode => $mode,
    checksum => $checksum,
    provider => $provider,
    recurse => $recurse
  }
}

class service3 {^\plab{ex1:service3}^
  $provider = posix

  file { 'config3':^\plab{ex1:config3}^
    path => 'path3',
    mode => $mode,
    checksum => $checksum,
    recurse => $recurse
  }
}
\end{lstlisting}
\end{minipage}
\hfill
\begin{minipage}{.40\linewidth}
\begin{lstlisting}
file { 'config3':
    path => 'path3'
}
file { 'config2':
    path => 'path2',
    source => '/source',
    provider => 'posix',
    recurse => true
}
file { 'config1':
    path => 'path1',
    source => '/source',
    mode => 123
}
\end{lstlisting}
\end{minipage}
\caption{Example manifest (left); compiled catalog (right)}
\label{ex1}
\end{figure}

\subsubsection{Variables and strict mode}

Puppet lacks variable declarations in the usual sense; instead variables are
implicitly declared when they are assigned to. A compilation error results if
a given variable is assigned to more than once in the same scope. As we saw
above, unqualified variables, whether being read or assigned to, are written
in ``scripting language'' style \kw{\$$x$}.

Puppet allows variables to be used before they are assigned, in which case
their value is a special ``undefined'' value \kw{undef}, analogous to Ruby's
\kw{nil} or JavaScript's \kw{undefined}. By default, attributes only appear in
the compiled output if their values are defined. Consider the variables
\kw{\$mode} and
\kw{\$checksum} introduced by the assignments at lines~\ref{ex1:var1}
and~\ref{ex1:var2} in the manifest in Figure~\ref{ex1}. The ordering of these
variables relative to the file resource
\kw{config1} is significant, because it affects whether they are in scope.
Since \kw{\$mode} is defined \emph{before}
\kw{config1}, its value can be read and assigned to the attribute
\kw{mode}. In the compiled catalog,
\kw{mode} thus appears as an attribute of \kw{config1}. On the other hand
\kw{\$checksum} is assigned \emph{after} \kw{config1}, and is therefore
undefined when read by the code which initialises the \kw{checksum} attribute.
Thus \kw{checksum} is omitted from the compiled version of \kw{config1}.

Since relying on the values of undefined variables is often considered poor
practice, Puppet provides a \emph{strict} mode which treats the use of
undefined variables as an error. For similar reasons, and also to keep the
formal model simple, \muPuppet always operates in strict mode. We discuss the
possibility of relaxing this in Section~\ref{sec:implementation:discussion}.

\begin{figure}[t]
\centering
  \includegraphics[scale=0.5]{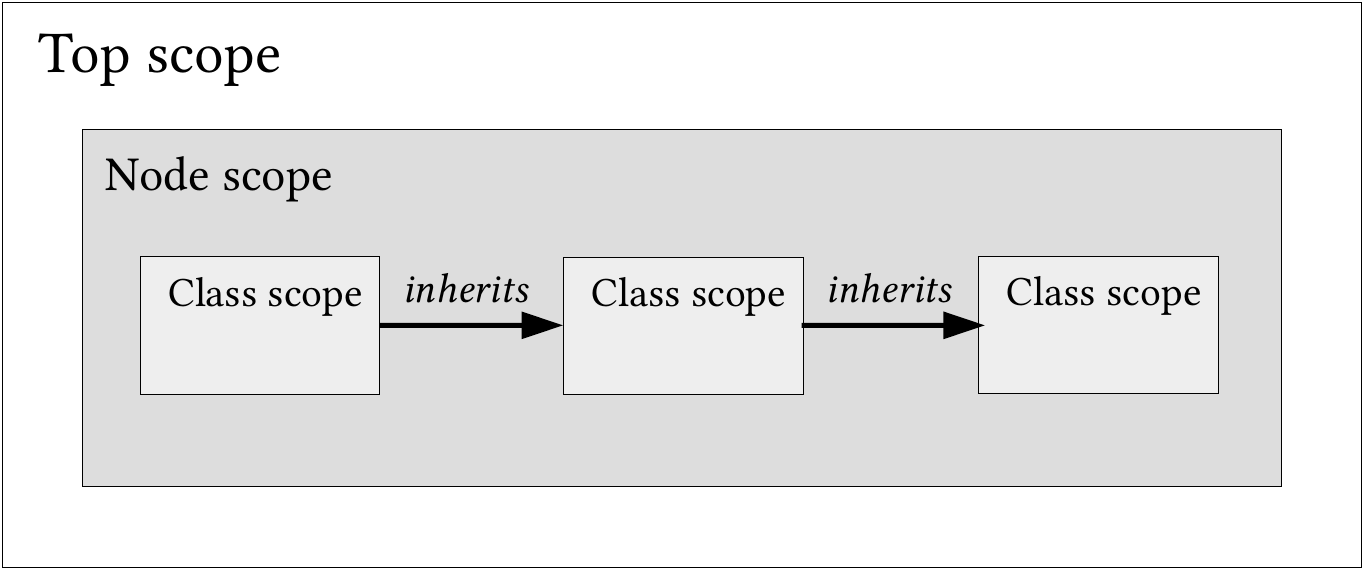}
  \caption{Two aspects of scope: parent scopes (shown as containment), and inheritance chains}
  \label{fig:scopes}
\end{figure}

\subsubsection{Classes and includes}
\label{sec:overview:classes}

Resource declarations may be grouped into {\em classes}. However, Puppet
``classes'' are quite different from the usual concept of classes in
object-oriented programming -- they define collections of resources which can
be declared together by {\em including} the class. This is sometimes called
\emph{declaring} the class, although there is a subtle but important
distinction between ``declaring'' and ``including'' which we will return to
shortly.

In Figure~\ref{ex1}, it is the inclusion into the node definition of class
\kw{service1} which explains the appearance of \kw{config1} in the catalog, and 
in turn the inclusion into \kw{service1} of class \kw{service2} which explains
the appearance of \kw{config2}. (The fact that
\kw{config3} also appears in the output relates to inheritance, and is
discussed in
\S\ref{sec:overview:inheritance} below.) Inclusion is idempotent: the same
class may be included multiple times, but doing so only generates a single
copy of the resources in the catalog. This allows a set of resources to be
included into all locations in the manifest which depend on them, without
causing errors due to duplicate declarations of the same resource.

To a first approximation, including a class into another class obeys a lexical
scope discipline, meaning names in the including class are not visible in the
included class. However inclusion into a node definition has a quite different
behaviour: it introduces a containment relation between the node definition
and the class, meaning that names scoped to the node definition are visible in
the body of the included class. Thus in Figure~\ref{ex1}, although the
variable \kw{\$mode} defined in
\kw{service1} is not in scope inside the included class \kw{service2} (as per
lexical scoping), the \kw{\$source} variable defined in the node definition
\emph{is} in scope in \kw{service1}, because \kw{service1} is included into
the node scope.

This is similar to the situation in Java where a class asserts its membership
of a package using a package declaration, except here the node definition pulls
\emph{in} the classes it requires. The subtlety is that it is actually when a
class is
 \emph{declared} (included for the first time, dynamically speaking) that any
names in the body of the class are resolved. If the \emph{usage} of a class
happens to change so that it ends up being declared in so-called
\emph{top} scope (the root namespace usually determine at check-in time), it
may pick up a different set of bindings. Thus including a class, although
idempotent, has a ``side effect'' -- binding the names in the class -- making
Puppet programs potentially fragile. More of the details of scoping are given
in the language reference manual~\cite{puppet4.8-lang-ref}.
 
\subsubsection{Qualified names}

A definition which is not in scope can be accessed using a \emph{qualified}
name, using a syntax reminiscent of C++ and Java, with atomic names separated
by the token \verb!::!. For example, in Figure~\ref{fig:simple-manifest}
above,
\verb!$::osfamily! refers to a variable in the top scope, while
\verb!$::ssh::params::sshd_package! is an absolute reference to the
\verb!$sshd_package! variable of class
\verb!ssh::params!. 

Less conventionally, Puppet also allows the name of a class to be a qualified
name, such as \verb!ssh::params! in Figure~\ref{fig:simple-manifest}. Despite
the suggestive syntax, which resembles a C++ member declaration, this is
mostly just a convention used to indicate related classes. In particular,
qualified names used in this way do not require any of the qualifying prefixes
to denote an actual namespace. (Although see the discussion in
Section~\ref{sec:implementation:discussion} for an interaction between this
feature and nested classes, which \muPuppet does not support.)

\subsubsection{Inheritance and class parameters}
\label{sec:overview:inheritance}

Classes may {\em inherit} from other classes; the inheriting class inherits
the variables of the parent class, including their values. In the earlier
example (Figure~\ref{ex1}), \kw{service2} inherits the value of
\verb!$provider! from \kw{service3}. Including a derived class implicitly
includes the inherited class, potentially causing the inherited class to be
declared (in the Puppet sense of the word) when the derived class is declared:

\begin{quote}
{\it When you declare a derived class whose base class hasn't already been
  declared, the base class is immediately declared in the current scope, and
  its parent assigned accordingly. This effectively ``inserts'' the base class
  between the derived class and the current scope. (If the base class has
  already been declared elsewhere, its existing parent scope is not changed.)}
\end{quote} 

\noindent This explains why \kw{config3} appears in the compiled catalog for
Figure~\ref{ex1}.

Since the scope in which a class is eventually declared determines the meaning
of the names in the class (\S~\ref{sec:overview:classes} above), inheritance
may have surprising (and non-local) consequences. At any rate, the use of
inheritance for most use cases is now discouraged.\footnote{\url{https://docs
.puppet.com/puppet/latest/style_guide.html}, section 11.1.} The main exception
is the use of inheritance to specify default values; this is the scenario
illustrated in Figure~\ref{fig:simple-manifest}.

\begin{figure}[tb]
\begin{lstlisting}
class ssh::params {^\plab{ex:ssh::params}^
  case $::osfamily {^\plab{ex:osfamily}^
  'Debian': { $sshd_package  = 'ssh' }
  'RedHat': { $sshd_package  = 'openssh-server' }
  default:  { fail("SSH class not supported") }
  }
}
class ssh ($ssh_pkg = $::ssh::params::sshd_package) inherits ssh::params { ^\plab{ex:default}^
  package { $ssh_pkg: ^\plab{ex:package}^
    ensure => installed
  }
}
node 'ssh.example.com' {
  include ssh ^\plab{ex:include}^
}
\end{lstlisting}

  \caption{Example manifest showing recommended use of inheritance for setting default parameters}
  \label{fig:simple-manifest}
\end{figure}

Line~\ref{ex:ssh::params} of Figure~\ref{fig:simple-manifest} introduces class
\verb!ssh::params!, which assigns to variable
\verb!$sshd_package! a value conditional on the operating system name
\verb!$::osfamily! (line~\ref{ex:osfamily}). The class
\verb!ssh! (line~\ref{ex:default}) inherits from
\verb!ssh::params!. It also defines a \emph{class parameter} \verb!$ssh_pkg!
(before the \kinherits clause), and uses the value of the
\verb!$sshd_package! variable in the inherited class as the default value for
the parameter. Because an inherited class is processed before a derived
class, the value of \verb!$sshd_package! is available at this point.

The value of the parameter \verb!$ssh_pkg! is then used as the title of the
\verb!package! resource declared in the \verb|ssh| class
(line~\ref{ex:package}) specifying that the relevant software package exists
in the target configuration. The last construct is a node definition
specifying how to configure the machine with hostname
\verb!ssh.example.com!. If host \verb!ssh.example.com! is a Debian machine,
the result of compiling this manifest is a catalog containing the following
\verb|package| resource:
\begin{lstlisting}
package { "ssh" :  ensure => installed } 
\end{lstlisting}

\subsubsection{Class statements}

Figure~\ref{ex2} defines a class \kw{c} with three parameters. The \kclass
statement (line~\ref{ex2:cdecl}) can be used to include a class and provide
values for (some of) the parameters. In the resulting catalog, the
\kw{from\_class} resource has \kw{backup} set to \ktrue (from the explicit
argument), \kw{mode} set to \kw{123} (because no \kw{mode} argument is
specified), and \kw{source} set to \kw{'/default'} (because the \kw{path}
variable is undefined at the point where the class is declared
(line~\ref{ex2:cdecl})).

% ** DO NOT EDIT ** - generated from ex2.pp by puppet-compile
\begin{figure}[!htb]
\begin{minipage}{.55\linewidth}
\begin{lstlisting}
class c (^\plab{ex2:service2}^
  $backupArg = false,
  $pathArg = '/default',
  $modeArg = 123 ) {

  file { 'from_class':^\plab{ex2:config2}^
    backup => $backupArg,
    source => $pathArg,
    path => $path,
    mode => $modeArg
  }
}

define d (^\plab{ex2:ddef}^
  $backupArg = false,
  $pathArg = '/default',
  $modeArg = 123 ) {

  file { 'from_define':^\plab{ex2:config3}^
    backup => $backupArg,
    source => $pathArg,
    path => $path,
    mode => $modeArg
  }
}

node default {

  $backup = true

  class { c:^\plab{ex2:cdecl}^
    backupArg => $backup,
    pathArg => $path 
  }

  d { "service3":^\plab{ex2:dcall}^
    backupArg => $backup,
    pathArg => $path
  }

  $path = '/path'
}
\end{lstlisting}
\end{minipage}
\hfill
\begin{minipage}{.40\linewidth}
\begin{lstlisting}
file { 'from_class':
    backup => true,
    source => '/default',
    mode => 123
}
file { 'from_define':
    path => '/path',
    backup => true,
    source => '/default',
    mode => 123
}
\end{lstlisting}
\end{minipage}
\caption{Manifest with class parameters and defined resource types (left); catalog (right)}
\label{ex2}
\end{figure}

However, the potential for conflicting parameter values means that multiple
declarations with parameters are not permitted, and the \kclass statement must
be used instead (which only allows a single declaration).

\subsubsection{Defined resource types}

\emph{Defined resource types} are similarly to classes, but provide a more
flexible way of introducing a user-defined set of resources. Definition \kw{d}
(line~\ref{ex2:ddef}) in Figure~\ref{ex2} introduces a defined resource type.
The definition looks very similar to a class definition, but the body is a
macro which can be instantiated (line~\ref{ex2:dcall}) multiple times with
different parameters.

Interestingly, the \kw{path} attribute in the \kw{from\_class} file is
undefined in the result, apparently because the assignment \kw{\$path =
'/path'} follows the declaration of the class --- however, in the
\kw{from\_define} file, \kw{path} \emph{is} defined as \kw{'/path'}! The
reason appears to be that defined resources are added to the catalog and
re-processed after other manifest
constructs.\footnote{\url{http://puppet-on-the-
edge.blogspot.co.uk/2014/04/getting-your-puppet-ducks-in-row.html}}

\section{\muPuppet}
\label{sec:language}

We now formalise \muPuppet, a language which captures many of the essential
features of Puppet.  Our goal is not to model all of Puppet's
idiosyncrasies, but instead to attempt to capture the `declarative'
core of Puppet, as a starting point for future study.  As we discuss
later, Puppet also contains several non-declarative features whose
behaviour can be counterintuitive and surprising; their
use tends to be discouraged in Puppet's documentation and by other
authors~\cite{rhett16}.

\subsection{Abstract syntax}
\label{sec:syntax}

The syntax of \muPuppet manifests $m$ is defined in
Figure~\ref{fig:muppet}, including expressions $e$ and statements
$s$. Constant expressions in \muPuppet can be integer literals $i$,
string literals $w$, or boolean literals $\ktrue$ or $\kfalse$. Other
expressions include arithmetic and boolean operations, variable forms
$\$x$, $\$\scopeSep x$ and $\$\classScope{a}\scopeSep x$.  Here, $x$
stands for variable names and $a$ stands for class names.  
\emph{Selectors} $e\ ?\ \{M\}$ are conditional expressions that evaluate $e$ and then
conditionally evaluate the first matching branch in $M$.  Arrays
are written $[e_1,\ldots,e_n]$ and hashes (dictionaries) are written $\{k \Rightarrow
e,\ldots\}$ where $k$ is a key (either a constant number or string). A
reference  $e_1[e_2] $ describes an array, a hash or a resource reference where $e_1$ itself can be a reference. 
When it is a resource reference, $e_{1}$ could be a built-in resource type.
Full Puppet includes additional base types
(such as floating-point numbers) and many more built-in functions that
we omit here.

\begin{figure}[tb]
\small
\[\begin{array}{lrcl}
\text{Expression}& e&::=&i \mid w \mid \ktrue \mid \kfalse \mid
                          \$x\mid\$\scopeSep
                          x\mid\$\classScope{a}\scopeSep x\\
&& \mid&  e_{1}+e_{2}\mid e_{1}-e_{2}\mid e_{1}/e_{2}
\mid e_{1}>e_{2}\mid e_{1}=e_{2} \mid e_{1}\kand e_{2} \mid e_{1}\kor
         e_{2} \mid \Not{e} \mid\ldots\\
&&\mid&  \{H\} \mid [e_1,\ldots,e_n] \mid e_1[e_2] \mid e\ \query\ \{M\} \\
\text{Array}&A&::=&\varepsilon \mid e, A\\
\text{Hash}&H&::=& \varepsilon \mid k\Rightarrow e, H\\
\text{Case}&c&::=& e \mid \kdefault \\
\text{Matches}& M&::=&\varepsilon \mid c\Rightarrow e, M\medskip\\
\text{Statement}& s&::=&e\mid s_1 \seq s_2 \mid \$x=e \mid
                          \kunless\ e\ \{s\} \mid \kif\  e\
                          \{s\}\ \kelse~\{s\} \mid \kcase\ e\ \{C\} \mid D \\
\text{Cases}&C&::=& \varepsilon \mid c: \{s\}\visiblespace C \\
\text{Declaration}&D&::=& t\;\{e:H\}\mid u\ \{e:H\} \mid\kclass\ \{a:H\} \mid \kinclude\ a
\medskip\\
\text{Manifest}&m&::=&s \mid m_1 \seq m_2 \mid \knode\ Q\ \{s\} \mid
                       \kdefine\ u\ (\rho)\ \{s\}\mid
                       \kclass\ a \ \{s\}\mid
        \kclass\ a\ (\rho)\ \{s\}
\\
&&\mid& \kclass\ a \ \kinherits\ b \ \{s\}\mid
        \kclass\ a\ (\rho)\ \kinherits\ b\ \{s\}\\
\text{Node spec} & Q & \bnf & N \mid \kdefault \mid (N_1,\ldots,N_k) \mid r \in \Set{RegExp} \\
\text{Parameters}&\rho & \bnf& \varepsilon \mid \$x, \rho \mid \$x=e, \rho\\
\end{array}
\]
\caption{Abstract syntax of \muPuppet}
\label{fig:muppet}
\end{figure}

Statements $s$ include expressions $e$ (whose value is discarded),
composite statements $s_1 \seq s_2$, assignments $\$x=e$, and
conditionals \kunless, \kif, \kcase, which are mostly standard. (Full
Puppet includes an \kelseif construct that we omit from \muPuppet.)
Statements also include resource declarations $t\;\{e:H\}$ for
built-in resource types $t$, resource declarations $u\ \{e:H\}$ for
defined resource types $u$, and class declarations
$\kclass\ \{a: H\}$ and $\kinclude~a$.  

Manifests $m$ can be statements $s$; composite manifests
$m_1 \seq m_2$, class definitions $\kclass\ a \ \{s\}$ with or without
parameters $\rho$ and inheritance clauses $\kinherits\ b$; node
definitions $\knode~Q~\{s\}$; or defined resource type definitions
$\kdefine~u~(\rho)~\{s\}$. Node specifications $Q$ include literal
node names $N$, \kdefault, lists of node names, and regular
expressions $r$ (which we do not model explicitly).

Sequences of statements, cases, or manifest items can be written by writing
one statement after the other, separated by whitespace, and we write
$\visiblespace$ when necessary to emphasise that this whitespace is
significant.  The symbol $\varepsilon$ denotes the empty string.

\subsection{Operational Semantics}
\label{sec:semantics}

We now define a small-step operational semantics for \muPuppet.  This
is a considered choice: although Puppet is advertised as a
declarative language, it is not \emph{a priori} clear that manifest
compilation is a terminating or even deterministic process.  Using
small-step semantics allows us to translate the (often) procedural
descriptions of Puppet's constructs directly from the documentation.

The operational semantics relies on auxiliary notions of catalogs
$\CV$, scopes $\alpha$, variable environments $\sigma$, and definition
environments $\kappa$ explained in more detail below.  We employ three
main judgements, for processing expressions, statements, and
manifests:
\[\sigma, \kappa, \CV,e \stepExpr{\alpha} e' 
  \qquad 
  \sigma,\kappa,\CV, s \stepStmt{\alpha} \sigma',\kappa',\CV' , s'
  \qquad 
  \sigma,\kappa,\CV, m \stepManifest{N} \sigma',\kappa',\CV' , m'\]
Here, $\sigma$, $\kappa$, and $\CV$ are the variable environment, definition
environment, and catalog beforehand, and their primed versions are the
corresponding components after one compilation step. The parameter $\alpha$
for expressions and statements represents the ambient scope; the parameter $N$
for manifests is the target node name.

The main judgement is $\stepManifest{}$, which takes a \muPuppet manifest $m$
and a node name $N$ and compiles it to a catalog $\CV$, that is, a list of
resource values $\RV$ for that node. Given initial variable environments
$\sigma$ (representing data provided by the client) and $\kappa$ (containing
any predefined classes or resource definitions), execution of manifest $m$
begins with an empty catalog and terminates with catalog $\CV$ when the
manifest equals $\kskip$, i.e. $\sigma,\kappa,\varepsilon , m \stepManifest{N}
\cdots
\stepManifest{N} \sigma',\kappa',\CV, \kskip$.

\subsubsection{Auxiliary definitions: catalogs, scopes and environments}

Before defining compilation formally, we first define catalogs
(\S\ref{sec:catalog}), the result of compiling manifests; scopes
(\S\ref{sec:scope}), which explicitly represent the ambient scope used to
resolve unqualified variable references; variable environments
(\S\ref{sec:var-environment}), which store variable bindings; and definition
environments (\S\ref{sec:def-environment}), which store class and resource
definitions.

\paragraph{Catalogs}
\label{sec:catalog}

The syntax of catalogs is given in Figure~\ref{fig:runtime-syntax}. A
\emph{catalog} $\CV$ is a sequence of resource values, separated by
whitespace; a \emph{resource value} $\RV = t\;\{w : \HV\}$ is a
resource whose
title is a string value and whose content is a hash value; a hash value $\HV$ is
an attribute-value sequence in which all expressions are values; and finally a
\emph{value} $v$ is either an integer literal $i$, string literal $w$, 
boolean literal $\ktrue$ or $\kfalse$, hash $\{\HV\}$, array
$[v_1,\ldots,v_n]$ or a reference value $t[v]$.  In a well-formed catalog, there is at most one
resource with a given type and title; attempting to add a resource
with the same type and title as one already in the catalog is an error.

\begin{figure}[tb]
\small
\[\begin{array}{lrcl}
\text{Catalog} & \CV & \bnf &\varepsilon \mid \RV\visiblespace\CV
\medskip\\
\text{Value}& v&::=&i\mid w \mid \ktrue \mid \kfalse \mid \{\HV\} \mid
    [v_1,\ldots,v_n]\mid t[v]\\
\text{Hash value} &\HV&::=& \varepsilon \mid k \Rightarrow v,\HV\\
\text{Resource value}& \RV&::=&t\;\{w : \HV\}
\medskip\\
\text{Scope}&\alpha &::=& \topScope \mid \classScope{a} \mid \nodeScope \mid \defScope{\alpha}
\medskip\\
\text{Statement}& s&::=& ... \mid \kscope\;\alpha\;s \mid \kskip
\end{array}
\]
\caption{Auxiliary constructs: catalogs and scopes}
\label{fig:runtime-syntax}
\end{figure}

\paragraph{Scopes}
\label{sec:scope}

As discussed in Section~\ref{sec:overview}, Puppet variables can be
assigned in one scope and referenced in a different scope. For example,
in Figure~\ref{fig:simple-manifest}, the parent scope of class scope
\verb|ssh| is class scope \verb|ssh::params|. To model this, we model
scopes and parent-child relations between scopes explicitly. Scope $\topScope$
represents the top scope, $\classScope{a}$ is the scope of class $a$,
$\nodeScope$ is the active node scope, and $\defScope{\alpha}$ is the scope of
a resource definition that is executed in ambient scope $\alpha$.

The scope for defined resources takes another scope parameter $\alpha$ in
order to model resource definitions that call other resource definitions. The
top-level, class, and node scopes are persistent, while $\defScope{\alpha}$ is
cleared at the end of the corresponding resource definition; thus these scopes
can be thought of as names for stack frames. The special statement form
$\kscope~\alpha~s$ is used internally in the semantics to model scope changes.
An additional internal statement form \kskip, unrelated to scopes, represents
the empty statement. Neither of these forms are allowed in Puppet manifests.

As discussed earlier, there is an ancestry relation on scopes, which
governs the order in which scopes are checked when dereferencing an
unqualified variable reference. We use mutually recursive auxiliary judgments $\alpha
\parentof{\kappa} \beta$
to indicate that $\alpha$ is the parent scope of $\beta$ in the context of
$\kappa$ and $\alpha \baseof{\kappa} \beta$ to indicate that $\alpha$ is the
\emph{base} scope of $\beta$.  The base scope is either $\topScope$,
indicating that the scope is the top scope, or $\nodeScope$, indicating that
the scope is being processed inside a node definition.  We first discuss the
rules for $\parentof{\kappa}$:
\begin{mathpar}
  \inferrule*[right=PNode]
  {
  }
  {
    \topScope \parentof{\kappa} \nodeScope
  }
  \and
  \inferrule*[right=PDefRes]
  {
    \beta \baseof{\kappa} \defScope{\alpha} 
  }
  {
    \topScope \parentof{\kappa} \defScope{ \alpha}
  }
  \and
  \inferrule*[right=PClass]
  {
    \kappa(a)=\Declared(\alpha)
  }
  {
    \alpha \parentof{\kappa} \classScope{a}
  }
\end{mathpar}

The \textsc{PNode} rules says that the top-level scope is the parent
scope of node scope.  The \textsc{PDefRes} rule says that the parent
scope of the defined resource type scope is its base scope.  Thus, a
resource definition being declared in the toplevel will have parent
$\topScope$, while one being declared inside a node definition will
have parent scope $\nodeScope$. The \textsc{PClass} rule defines the scope of the
(declared) parent class to be the scope $\alpha$ that is recorded in the
$\Declared(\alpha)$ entry.  The rules for class inclusion and
declaration in the next section show how the $\Declared(\alpha)$ entry
is initialised; this also uses the $\baseof{\kappa}$ relation.
The rules defining $\baseof{\kappa}$ are as follows:
\begin{mathpar}
  \inferrule*[right=BTop]
  {
  }
  {
    \topScope \baseof{\kappa} \topScope
  }
  \and
  \inferrule*[right=BNode]
  {
  }
  {
    \nodeScope \baseof{\kappa} \nodeScope
  }
  \and
  \inferrule*[right=BDefRes]
  {
    \alpha \baseof{\kappa} \beta
  }
  {
    \alpha \baseof{\kappa} \defScope{ \beta}
  }
  \and
  \inferrule*[right=BClass]
  {
    \kappa(a)=\Declared(\beta) \\
    \alpha \baseof{\kappa} \beta
  }
  {
    \alpha \baseof{\kappa} \classScope{a}
  }
\end{mathpar}

These rules determine whether the ambient scope $\alpha$ in which the class is
declared is \emph{inside} or \emph{outside} a node declaration.  The base
scope of toplevel or node scope is toplevel or node scope respectively.  The
base scope of $\defScope{\beta}$  is the base scope of $\beta$, while the
base scope of a class scope $\classScope{a}$ is the base scope of its parent
scope as defined in the definition environment $\kappa$. (We will only try to
obtain the base scope of a class that has already been declared.)

\paragraph{Variable environments}
\label{sec:var-environment}

During the compilation of a manifest, the values of variables are recorded in
\emph{variable environments} $\sigma$ which are then accessed when these
variables are referenced in the manifest. (We call these \emph{variable}
environments, rather than plain environments, since ``environment'' has a
specific technical meaning in Puppet; see the glossary in \appref{glossary}.) As 
discussed in section 2.1.3, Puppet allows variables to be referenced before being defined, whereas
the variable environment is designed in the way not to allow it. A variable can only be referenced in 
the environment if its value has been stored. Thus the undefined variables in the manifest in Figure 2 
are not legal in \muPuppet.

Formally, a variable environment is defined as a partial function $\sigma :
\Set{Scope}
\times \Set{Var} \to \Set{Value}$ which maps pairs of scopes and
variables to values. The scope component indicates the scope in which the
variable was assigned. We sometimes write $\sigma_\alpha(x)$ for
$\sigma(\alpha,x)$. Updating a variable environment $\sigma$ with new binding
$(\alpha,x) $ to $v$ is written $\fupdate{\sigma}{\alpha}{x}{v}$, and clearing
an environment (removing all bindings in scope $\alpha$) is written
$\clear(\sigma,\alpha)$.

\paragraph{Definition environments}
\label{sec:def-environment}

Some components in Puppet, like classes and defined resource types,
introduce \emph{definitions} which can be declared elsewhere. To model
this, we record such definitions in \emph{definition environments}
$\kappa$. Formally, a definition environment is a partial function
$\kappa : \Set{Identifier} \to \Set{Definition}$ mapping each
identifier to a definition $D$.  Evaluation of the definition only
begins when a resource is declared which uses that definition.

Definitions are of the following forms:
\begin{eqnarray*}
D &\bnf&  \ClassDef(c_{opt}, \rho, s)
\mid  \Declared(\alpha)
\mid \ResourceDef(\rho, s) \\
c_{opt} &\bnf& c \mid \bot
\end{eqnarray*}
The definition form $\ClassDef(c_{opt}, \rho, s)$ represents the
definition of a class (before it has been declared); $c_{opt}$ is the
optional name of the class's parent, $\rho$ is the list of parameters
of the class (with optional default values), and $s$ is the body of
the class. The definition form $\Declared(\alpha)$ represents a class
that has been declared; $\alpha$ is the class's \emph{parent scope}
and $\rho$ and $s$ are no longer needed. In Puppet, the definition of a class
can appear before or after its declaration, as we saw in
the manifest in Figure 2, whereas the definition environment is
designed to require that a class is defined before it is declared. Thus the
inclusion of class \kw{service1} in Figure 2 will be not evaluated in \muPuppet.
Moreover, a class can be declared only once in Puppet, and when it is declared 
its definition environment entry is changed to $\Declared(c_{opt})$. Finally, the definition form
$\ResourceDef(\rho, s)$ represents the definition of a new
\emph{resource type}, where $\rho$ and $s$ are as above.

\subsubsection{Expression evaluation}

Expressions are the basic computational components of \muPuppet. The rules for
expression forms such as primitive operations are standard. The rules for
selector expressions are also straightforward. Since variable accessibility
depends on scope, the variable evaluation rules are a little more involved:
\begin{mathpar}
  \inferrule*[right=LVar]
  {
    x\in \dom(\sigma_{\alpha})
  }
  {
    \sigma,\kappa,\CV,\$x\stepExpr{\alpha}\sigma_{\alpha}(x)
  }
  \and
  \inferrule*[right=PVar]
  {
    x\notin \dom(\sigma_{\alpha})
    \\
    \sigma,\kappa,\CV,\$x\stepExpr{\beta} v
    \\
    \beta \parentof{\kappa} \alpha
  }
  {
    \sigma,\kappa,\CV,\$x\stepExpr{\alpha}v
  }
  \and
  \inferrule*[right=TVar]
  {
    x \in \dom(\sigma_{\topScope})
  }
  {
    \sigma,\kappa,\CV, \${::} x\stepExpr{\alpha}\sigma_{\topScope}(x)
  }
  \and
  \inferrule*[right=QVar]
  {
    x\in \dom(\sigma_{\classScope{a}})
  }
  {
    \sigma,\kappa,\CV,\$\classScope{a}::x\stepExpr{\alpha}\sigma_{\classScope{a}}(x)
  }
\end{mathpar}

The \textsc{LVar} looks up the value of an unqualified variable in the current
scope, if present. The \textsc{PVar} rule handles the case of an unqualified
variable that is not defined in the current scope; its value is the value of
the variable in the parent scope. The \textsc{TVar} and \textsc{QVar} rules
look up fully-qualified variables in top scope or class scope, respectively.
(There is no qualified syntax for referencing variables in node scope from
other scopes.)

$\mu$Puppet also includes array and hash expressions.  An array is a
list of expressions in brackets and a hash is a list of keys and their
expression assignments in braces. When the expressions are values, an
array or a hash is also a value. Each expression in them can be
dereferenced by the array or hash followed by its index or key in
brackets. The rules for constructing and evaluating arrays and hashes
are straightforward, and included in \appref{rules}.

Resource references of the form $t[v]$ are allowed as values, where
$t$ is a built-in resource name and $v$ is a (string) value.  Such
references can be used as parameters in other resources and to express
ordering relationships between resources.  Resource references can be
used to extend resources or override inherited resource parameters; we
do not model this behaviour.  A resource reference can also (as of
Puppet 4) be used to access the values of the resource's parameters.
This is supported in $\mu$Puppet as shown in the following
example. 

\begin{lstlisting}
  file {"foo.txt": 
    owner => "alice"
  }
  $y = "foo.txt"
  $x = File[$y]
  file {"bar.txt":
    owner => $x["owner"]
  }
\end{lstlisting}
In this example, we first declare a file resource, with an
\verb|owner| parameter \verb|"alice"|, then we assign $y$ the filename
and $\$x$ a resource reference (value) \verb|File["foo.txt"]|.  Then
in defining a second file resource we refer to the \verb|"owner"|
parameter of the already-declared file resource via the reference
\verb|File["foo.txt"]|. This declaration results in a second file
resource with the same owner as the first.

The rules for dereferencing arrays, hashes, and resource references
are as follows:
\begin{mathpar}
 \inferrule*[right=DeRefExp]
 {
   \sigma, \kappa, \CV, d \xrightarrow{\alpha} d'
  }
  {
    \sigma,\kappa, \CV, d[e] \xrightarrow{\alpha} d'[e]
  }
  \and
  \inferrule*[right=DeRefIndex]
 {
    \sigma, \kappa,\CV, e \xrightarrow{\alpha} e'
  }
  {
    \sigma,\kappa, \CV, v[e] \xrightarrow{\alpha} v[e']
  }
  \and
  \inferrule*[right=DeRefArray]
 { }
  {
   \sigma, \kappa, \CV, [v_{0},\ldots,v_{n}, \ldots, v_{m}][n] \xrightarrow{\alpha} v_{n}
  }
  \and
  \inferrule*[right=DeRefHash]
 {
    k=k_{n}
  }
  {
   \sigma,\kappa, \CV, \{k_{1}=v_{1}, \ldots,k_{n}=v_{n},\ldots, k_{m}=v_{m}\}[k] \xrightarrow{\alpha}  v_{n}
  } 
  \\
  \and
  \inferrule*[right=RefRes]
 {
   \sigma, \kappa, \CV, e \xrightarrow{\alpha} e'
  }
  {
   \sigma,\kappa, \CV, t[e]\xrightarrow{\alpha} t[e']
  }   
  \and
  \inferrule*[right=DeRefRes]
 {
  \lookupCat(\CV, t, w, k)=v
  }
  {
   \sigma,\kappa, \CV, t[w][k]\xrightarrow{\alpha} v
  }   
\end{mathpar}

In the rule \textsc{DeRefExp} the expression $e$ is evaluated to an array or a hash value. The rule \textsc{DeRefIndex} evaluates the index inside the brackets to a value. Rule \textsc{DeRefArray} accesses the value in an array at the index $n$ while rule \textsc{DeRefHash} accesses the hash value by searching its key $k$.  There could be a sequence of reference indexes in a reference. As we can see, such reference is evaluated in the left-to-right order of the index list. 
Rule \textsc{ResRef} evaluates the index and in the
\textsc{DeRefRec} rule, the function $\lookupCat$ looks up the
catalog for the value of the attribute $k$ of the resource $t[v]$.

\subsubsection{Statement evaluation}
As with expressions, some of the statement forms, such as sequential
composition, conditionals (\verb|if|, \verb|unless|), and \verb|case|
statements have a conventional operational semantics, shown in \appref{rules}.
An expression can occur as a statement; its value is ignored. Assignments,
like variable references, are a little more complex. When storing the value of
a variable in an assignment in $\sigma$, the compilation rule binds the value
to $x$ in the scope $\alpha$:
\begin{mathpar}
  \inferrule*[right=AssignStep]
  {
    \sigma, \kappa , \CV, e \stepExpr{\alpha} e'
  }
  {
    \sigma,\kappa,\CV , \$x = e \stepStmt{\alpha} \sigma,\kappa,\CV, \$x = e'
  }
  \and
  \inferrule*[right=Assign]
  {
    x\notin \dom(\sigma_\alpha)
  }
  {
    \sigma,\kappa,\CV , \$x = v \stepStmt{\alpha}
  \fupdate{\sigma}{\alpha}{x}{v},\kappa,\CV , \kskip
  }
\end{mathpar}

\noindent Notice that Puppet does not allow assignment into any other scopes,
only the current scope $\alpha$.

We now consider $\kscope~\alpha~s$ statements, which are internal
constructs (not part of the Puppet source language) we have introduced
to track the scope that is in effect in different parts of the
manifest during execution. The following rules handle compilation
inside \kscope statements and cleanup when execution inside such a
statement finally terminates.

\begin{mathpar}
  \inferrule*[right=ScopeStep]
  {
    \alpha\in\{\topScope,\classScope{a},\nodeScope\}
    \\
    \sigma,\kappa,\CV, s\stackrel{\alpha}{\rightarrow}_{\mathrm{s}}\sigma',\kappa',\CV', s'
  }
  {
    \sigma,\kappa,\CV, \mathtt{scope}\ \alpha\ s\stepStmt{\alpha'}\sigma',\kappa',\CV', \mathtt{scope}\ \alpha\ s'
  }
  \and
  \inferrule*[right=DefScopeStep]
  {
    \sigma,\kappa,\CV, s\stepStmt{\defScope{\alpha}}\sigma',\kappa',\CV', s'
  }
  {
    \sigma,\kappa,\CV, \mathtt{scope}\ (\defScope{\alpha})\ s\stepStmt{\alpha}\sigma',\kappa',\CV', \mathtt{scope}\ (\defScope{\alpha})\ s'
  }
  \and
  \inferrule*[right=ScopeDone]
  {
    \alpha\in\{\topScope,\; \classScope{a},\;\nodeScope\}
  }
  {
    \sigma,\kappa,\CV , \mathtt{scope}\ \alpha\ \mathtt{skip} \stepStmt{\beta} \sigma,\kappa,\CV
    , \mathtt{skip}
  }
  \and
  \inferrule*[right=DefScopeDone]
  {
  }
  {
    \sigma,\kappa,\CV, \mathtt{scope}\ (\defScope{\alpha})\
  \mathtt{skip}\stepStmt{\alpha}\clear(\sigma,\defScope{\alpha}),\kappa,\CV,\mathtt{skip}
  }
\end{mathpar}

The \textsc{ScopeStep} and \textsc{ScopeDef} rules handle compilation
inside a scope; the ambient scope $\alpha'$ is overridden and the
scope parameter $\alpha$ is used instead.  The \textsc{ScopeDone} rule
handles the end of compilation inside a ``persistent'' scope, such as
top-level, node or class scope, whose variables persist throughout
execution, and the \textsc{DefScopeDone} rule handles the temporary
scope of defined resources, whose locally-defined variables and
parameters become unbound at the end of the definition.  (In contrast,
variables defined in toplevel, node, or class scopes remain visible
throughout compilation.)

Resource declarations are compiled in a straightforward way; the title
expression is evaluated, then all the expressions in attribute-value pairs in
the hash component are evaluated. Once a resource is fully evaluated, it is
appended to the catalog:
\begin{mathpar}
  \inferrule*[right=ResStep]
  {
    \sigma, \kappa , \CV, e:H \stepRes{\alpha}e':H'
  }
  {
    \sigma,\kappa,\CV, t\;\{e:H\} 
    \stepStmt{\alpha} 
    \sigma,\kappa,\CV, t\;\{e':H'\}
  }
  \and
  \inferrule*[right=ResDecl]
  {
    \strut
  }
  {
    \sigma,\kappa,\CV , \RV \stepStmt{\alpha} \sigma,\kappa,\CV \seq \RV
  , \kskip
  }
\end{mathpar}

Defined resource declarations look much like built-in resources:
\begin{mathpar}
  \inferrule*[right=DefStep]
  {
    \sigma,\kappa, \CV, \{e: H\}\stepRes{\alpha} \{e': H'\}
  }
  {
    \sigma,\kappa,\CV, u\ \{e: H\}\stepStmt{\alpha} \sigma,\kappa,\CV, u\ \{e': H'\}
  }
  \and
  \inferrule*[right=Def]
  {
    \kappa(u)=\ResourceDef(\rho, s)
    \\
    s'=\merge(\rho,\HV)
  }
  {
    \sigma,\kappa,\CV, u\ \{w: \HV\}
    \stepStmt{\alpha}
    \sigma,\kappa,\CV, \mathtt{scope}\ (\defScope{\alpha})\ \{\$title= w\seq s' \seq s\}
  }
\end{mathpar}

The $merge$ function returns a statement $s'$ assigning the parameters to
their default values in $\rho$ or overridden values from $\HV$. Notice
that we also add the special parameter binding $\$title = w$; this
is because in Puppet, the title of a defined resource is made
available in the body of the resource using the parameter $\$title$.  The body of
the resource definition $s$ is processed in scope $\defScope{\alpha}$. Class
declarations take two forms: \emph{include-like} and
\emph{resource-like declarations}. 

The statement $\kinclude\ a$ is an include-like
declaration of a class $a$.  (Puppet includes some additional
include-like declaration forms such as \verb|contain| and
\verb|require|).  Intuitively, this means that the class body is
processed (declaring any ancestors and resources inside the class),
and the class is marked as declared; a class can be declared at most
once. The simplest case is when a class has no parent, covered by the
first two rules below:
\begin{mathpar}
  \inferrule*[right=IncU]
  {
    \kappa(a)=\ClassDef(\bot,\rho,s)
    \\
    s' = \merge(\rho,\varepsilon)
    \\
    \beta \baseof{\kappa} \alpha
  }
  {
    \sigma,\kappa,\CV, \mathtt{include}\ a \stepStmt{\alpha}
 \sigma,\defupdate{\kappa}{a}{\Declared(\beta)},\CV,
         \mathtt{scope}\ (\classScope{a})\ \{s' \seq s\}
  }
  \and
  \inferrule*[right=IncD]
  {
    \kappa(a)=\Declared(\beta)
  }
  {
    \sigma,\kappa,\CV,\mathtt{include}\ a
  \stepStmt{\alpha}\sigma,\kappa,\CV,\mathtt{skip}
  }
  \and
  \inferrule*[right=IncPU]
  {
    \kappa(a)=\ClassDef(b,\rho,s)
    \\
    \kappa(b)=\ClassDef(c_{opt},\rho',s')
  }
  {
    \sigma,\kappa,\CV, \mathtt{include}\ a
  \stepStmt{\alpha}\sigma,\kappa,\CV,\mathtt{include}\ b\visiblespace
  \mathtt{include}\ a
  }
  \and
  \inferrule*[right=IncPD]
  {
    \kappa(a)=\ClassDef(b,\rho,s)
    \\
    \kappa(b)=\Declared(\beta)
    \\
    s' = \merge(\rho,\varepsilon)
  }
  {
    \sigma,\kappa,\CV, \mathtt{include}\ a
  \stepStmt{\alpha}\sigma,\defupdate{\kappa}{a}{\Declared(\classScope{b})},\CV,
  \mathtt{scope}\ (\classScope{a})\ \{s' \seq s\}
  }
\end{mathpar}

In the \textsc{IncU} rule, the class has not been declared yet, so we look up its body and
default parameters and process the body in the appropriate scope. (We use the
$merge$ function again here to obtain a statement initialising all parameters
which have default values.)  In addition, we modify the class's entry
in $\kappa$ to $\Declared(\beta)$, where $\beta \baseof{\kappa}
\alpha$.  
As described in Section~\ref{sec:overview}, this aspect of
Puppet scoping is dynamic: if a base class is defined outside a node
definition then its parent scope is $\topScope$, whereas if it is
declared during the processing of a node definition then its parent
scope is $\nodeScope$.  (As discussed below, if a class inherits from another,
however, the parent scope is the scope of the parent class no matter
what).  If this sounds confusing, this is because it is; this is the
trickiest aspect of Puppet scope that is correctly handled by
\muPuppet.  This complexity appears to be one reason that the use of
node-scoped variables is discouraged
by some experts~\cite{rhett16}.

In the \textsc{IncD} rule, the class $a$ is already declared, so no action
needs to be taken.  In the \textsc{IncPU} rule, we include the parent class so
that it (and any ancestors) will be processed first.  If there is an
inheritance cycle, this process loops; we have confirmed experimentally that
Puppet does not check for such cycles and instead fails with a stack overflow.
In the \textsc{IncPD} rule, the parent class is already declared, so we
proceed just as in the case where there is no parent class.

The rules for \emph{resource-like class declarations} are similar:
\begin{mathpar}
  \inferrule*[right=CDecStep]
  {
    \kappa(a)=\ClassDef(c_{opt},\rho,S)
    \\
    \sigma,\kappa, \CV, H\stepHash{\alpha} H'
  }
  {
    \sigma,\kappa,\CV, \mathtt{class}\
  \{a: H\}\stepStmt{\alpha} \sigma,\kappa,\CV,
  \mathtt{class}\
  \{a: H'\}
  }
  \and
  \inferrule*[right=CDecU]
  {
    \kappa(a)=\ClassDef(\bot,\rho,s)
    \\
    s'=\merge(\rho, \HV)
    \\
    \beta \baseof{\kappa} \alpha
  }
  {
    \sigma,\kappa,\CV, \mathtt{class}\ \{a: \HV\}\stepStmt{\alpha} \sigma,\defupdate{\kappa}{a}{\Declared(\beta)},\CV, \mathtt{scope}\ (\classScope{a})\ \{s' \seq s\}
  }
  \and
 %
  % \inferrule*[right=CDecD]
  % {
  %   \kappa(a)=\Declared(c_{opt})
  % }
  % {
  %   \sigma,\kappa,\CV, \mathtt{class}\
  % \{a: \HV\}
  % \stepStmt{\alpha}\sigma,\kappa,\CV,\mathtt{skip}
  % }
  % %
  \and
  \inferrule*[right=CDecPU]
  {
    \kappa(a)=\ClassDef(b,\rho,s)
    \\
    \kappa(b)=\ClassDef(c_{opt},\rho',s')
  }
  {
    \sigma,\kappa,\CV, \mathtt{class}\
  \{a: \HV\}\stepStmt{\alpha}\sigma,\kappa,\CV,
  \mathtt{include}\ b\seq \mathtt{class}\
  \{a: \HV\}
  }
\and
  \inferrule*[right=CDecPD]
  {
    \kappa(a)=\ClassDef(b,\rho,s)
    \\
    \kappa(b)=\Declared(\beta)
    \\
    s'=\merge(\rho,\HV)
  }
  {
    \sigma,\kappa,\CV, \mathtt{class}\ \{a: \HV\}
    \stepStmt{\alpha}
    \sigma, \defupdate{\kappa}{a}{\Declared(\classScope{b})},\CV, \mathtt{scope}\ (\classScope{a})\ \{s' \seq s\}
  }
\end{mathpar}

There are two differences.  First, because resource-like class
declarations provide parameters, the rule \textsc{CDecStep} provides
for evaluation of these parameters.  Second, there is no rule
analogous to \textsc{IncD} that ignores re-declaration of an
already-declared class.  Instead, this is an error.  (As with multiple
definitions of variables and other constructs, however, we do not
explicitly model errors in our rules.)

\subsubsection{Manifest compilation}

At the top level, manifests can contain statements, node definitions, resource
type definitions, and class definitions. To compile statements at the top
level, we use the following rule:
\begin{mathpar}
  \inferrule*[Right=TopScope]
  {
    \sigma,\kappa,\CV , s \stepStmt{\topScope}
    \sigma',\kappa',\CV' , s'
  }
  {
    \sigma,\kappa,\CV , s \stepManifest{N} \sigma',\kappa',\CV'
    , s'
  }
\end{mathpar}

The main point of interest here is that we change from the manifest
judgement (with the node name parameter $N$) to the statement
judgement (with toplevel scope parameter $\topScope$).  The node name parameter
is not needed for processing statements, and we (initially) process
statements in the toplevel scope.  Of course, the statement $s$ may
well itself be a \kscope statement which immediately changes the
scope.

A manifest in Puppet can configure all the machines (nodes) in a
system. A node definition describes the configuration of one node (or
type of nodes) in the system. The node declaration includes a
specifier $Q$ used to match against the node's hostname. We abstract
this matching process as a function $\nodeMatch(N,Q)$ that checks if
the name $N$ of the requesting computer matches the specifier
$Q$. If so (\textsc{NodeMatch}) we will compile the statement body of
$N$. Otherwise (\textsc{NodeNoMatch}) we will skip this definition and
process the rest of the manifest.
\begin{mathpar}
  \inferrule*[right=NodeMatch]
  {
    \nodeMatch(N,Q)
  }
  {
    \sigma,\kappa,\CV ,  \mathtt{node}\; Q\;\{ s
    \}\stepManifest{N} \sigma,\kappa,\CV , \mathtt{scope}\ (\nodeScope)\ s
  }
  \and
  \inferrule*[right=NodeNoMatch]
  {
    \neg\nodeMatch(N,Q)
  }
  {
    \sigma,\kappa,\CV ,  \mathtt{node}\; Q\;\{ s
    \}\stepManifest{N} \sigma,\kappa,\CV ,  \kskip
  }
\end{mathpar}

Resource type definitions in Puppet are used to design new, high-level
resource types, possibly by declaring other built-in resource types,
defined resource types, or classes. Such a definition includes Puppet
code to be executed when the a resource of the defined type is
declared.  Defined resource types can be declared multiple times with
different parameters, so resource type definitions are loosely
analogous to procedure calls.  
The following is an example of a defined resource type:
\begin{lstlisting}
define apache::vhost (Integer $port) {
  include apache
  file { "host":
    content => template('apache/vhost-default.conf.erb'),
    owner   => 'www'
  }
}
\end{lstlisting}
%$
The compilation rule for defining a defined resource type is:
\begin{mathpar}
  \inferrule*[right=RDef]
  {
    u \notin \dom(\kappa)
  }
  {
    \sigma,\kappa,\CV,\mathtt{define} \ u\ (\rho)\ \{s\}
    \stepManifest{N} 
    \sigma, \defupdate{\kappa}{u}{\ResourceDef(\rho,s)},\CV,\mathtt{skip}
  }
\end{mathpar}

The definition environment is updated to map $u$ to $\ResourceDef(\rho,s)$
containing the parameters and statements in the definition of $u$. The
manifest then becomes $\kskip$.

A class definition is used for specifying a particular service that
could include a set of resources and other statements. Classes are
defined at the top level and are declared as part of statements, as
described earlier.  Classes can be parameterised; the parameters are
passed in at declaration time using the resource-like declaration
syntax. The parameters can be referenced as variables in the class
body. A class can also inherit directly from one other class. The
following rules handle the four possible cases:
\begin{mathpar}
  \inferrule*[right=CDef]
  {
    a \notin \dom(\kappa)
  }
  {
    \sigma,\kappa,\CV, \mathtt{class}\ a\ \{s\}
    \stepManifest{N}
    \sigma, \defupdate{\kappa}{a}{\ClassDef(\bot,\varepsilon,s)},\CV, \mathtt{skip}
  }
  \and
  \inferrule*[right=CDefI]
  {
    a \notin \dom(\kappa)
  }
  {
    \sigma,\kappa,\CV, \mathtt{class}\ a\ \mathtt{inherits}\ b\ \{s\}
    \stepManifest{N}
    \sigma, \defupdate{\kappa}{a}{\ClassDef(b,\varepsilon,s)},\CV, \mathtt{skip}
  }
  \and
  \inferrule*[right=CDefP]
  {
    a \notin \dom(\kappa)
  }
  {
    \begin{array}{l}\sigma,\kappa,\CV, \mathtt{class}\ a\ (\rho)\ \{s\}
    \stepManifest{N}
    \sigma, \defupdate{\kappa}{a}{\ClassDef(\bot,\rho,s)},\CV, \mathtt{skip}\end{array}
  }
  \and
  \inferrule*[right=CDefPI]
  {
    a \notin \dom(\kappa)
  }
  {
    \sigma,\kappa,\CV, \mathtt{class}\ a\ (\rho)\ \mathtt{inherits}\ b\ \{s\}
    \stepManifest{N}
    \sigma, \defupdate{\kappa}{a}{\ClassDef(b,\rho,s)},\CV, \mathtt{skip}
  }
\end{mathpar}

In the simplest case (\textsc{CDef}) we add the class definition to the
definition environment with no parent and no parameters.  The other three
rules handle the cases with inheritance, with parameters, or with both.

\section{Metatheory}
\label{sec:metatheory}

Because Puppet has not been designed with formal properties in mind, there is
relatively little we can say formally about the ``correctness'' of \muPuppet.
Instead, the main measure of correctness is the degree to which \muPuppet
agrees with the behaviour of the main Puppet implementation, which is the
topic of the next section.  Here, we summarise two properties of \muPuppet
that guided our design of the rules, and provide some justification
for the claim that \muPuppet is `declarative'.  First, evaluation is deterministic: a
given manifest can evaluate in at most one way.

\begin{theorem}
  [Determinism]
All of the evaluation relations of \muPuppet are deterministic: 
\begin{itemize}
\item If $\sigma, \kappa, \CV,e \stepExpr{\alpha} e'$ and $\sigma,
  \kappa, \CV,e \stepExpr{\alpha} e''$ then $e' = e''$.
  \item If $\sigma,\kappa,\CV, s \stepStmt{\alpha}
    \sigma',\kappa',\CV' , s'$ and $\sigma,\kappa,\CV, s \stepStmt{\alpha}
    \sigma'',\kappa'',\CV'' , s''$ then $\sigma'=\sigma''$, $\kappa' =
    \kappa''$, $\CV' = \CV''$ and $s' = s''$.
  \item If
    $\sigma,\kappa,\CV, s \stepManifest{N} \sigma',\kappa',\CV' , m'$
    and
    $\sigma,\kappa,\CV, m \stepManifest{N} \sigma'',\kappa'',\CV'' ,
    m''$
    then $\sigma'=\sigma''$, $\kappa' = \kappa''$, $\CV' = \CV''$ and
    $m' = m''$.
\end{itemize}
\end{theorem}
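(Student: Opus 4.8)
The three statements depend on one another in one direction only: an expression step never triggers a statement or manifest step; a statement step may trigger expression steps (in \textsc{AssignStep}, in \textsc{ResStep}/\textsc{DefStep} through $\stepRes{}$, in \textsc{CDecStep} through $\stepHash{}$, and in the guards of \kif, \kunless, \kcase); and a manifest step may trigger a statement step (\textsc{TopScope}, \textsc{NodeMatch}). The plan is therefore to prove the three cases in that order, each by induction on the derivation of the first step, appealing to the already-proved cases for the triggered sub-steps and to the induction hypothesis for the recursive premises (sequential composition, scope bodies, operand and index congruences). Working by induction on the derivation rather than on the subject term is convenient because of \textsc{PVar}, whose second premise re-evaluates the \emph{same} expression $\$x$ in the parent scope; since that is a strict sub-derivation, the hypothesis still applies. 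The subsidiary judgements $\stepRes{}$ (resource bodies) and $\stepHash{}$ (hash bodies) are folded into the statement induction.

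First I would dispatch two auxiliary facts. (i) $\parentof{\kappa}$ and $\baseof{\kappa}$ are deterministic: for $\baseof{\kappa}$ this is immediate by induction on the scope argument, since exactly one of \textsc{BTop}, \textsc{BNode}, \textsc{BDefRes}, \textsc{BClass} matches each scope form and $\kappa$ is a partial function; $\parentof{\kappa}$ then follows, as \textsc{PNode}, \textsc{PDefRes}, \textsc{PClass} are likewise selected by the scope form (with \textsc{PDefRes} deferring to $\baseof{\kappa}$ and \textsc{PClass} to $\kappa$). (ii) $\nodeMatch$, $\merge$, $\clear$, $\lookupCat$, and the array/hash lookup used in the dereference rules are defined as functions and so are single-valued. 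I would also record the trivial fact that no value takes an expression step, which is precisely what makes each congruence rule mutually exclusive with its matching reduction rule.

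The core is then a long but mechanical case analysis: for each judgement, case on the last rule of the first derivation; the shape of the subject term together with the side conditions pins the last rule of the second derivation down to the same rule; and the resulting states are equal either outright (e.g.\ \textsc{Assign}, \textsc{ResDecl}, \textsc{IncD}, \textsc{NodeMatch}/\textsc{NodeNoMatch}, \textsc{RDef}, the four \textsc{CDef} rules), or by the induction hypothesis on the shared reducible subterm (e.g.\ \textsc{AssignStep}, \textsc{ScopeStep}/\textsc{DefScopeStep}, \textsc{ResStep}, \textsc{CDecStep}, and the operator/index congruences), or by an auxiliary fact (determinism of $\parentof{\kappa}$/$\baseof{\kappa}$ for \textsc{PVar}, \textsc{IncU}, \textsc{CDecU}; functionality of $\lookupCat$ for \textsc{DeRefRes}, of $\nodeMatch$ for the node rules). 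The rule pairs that share a subject shape are separated by disjoint guards: $x\in\dom(\sigma_\alpha)$ versus $x\notin\dom(\sigma_\alpha)$ (\textsc{LVar} versus \textsc{PVar}); ``the relevant subterm is a value'' versus not, for every congruence/reduction pair; ``the body is $\kskip$'' versus not (\textsc{ScopeDone} versus \textsc{ScopeStep}, \textsc{DefScopeDone} versus \textsc{DefScopeStep}); and the mutually exclusive cases $\kappa(a)=\ClassDef(\bot,\dots)$, $\kappa(a)=\Declared(\dots)$, and $\kappa(a)=\ClassDef(b,\dots)$ with $\kappa(b)$ a $\ClassDef$ or a $\Declared$, for the include- and resource-like declaration rules --- here determinism rests precisely on $\kappa$ being a partial function.

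I expect the only genuine obstacle to be bookkeeping: verifying that these guards really do exhaust and partition the overlapping cases. The dereference rules need the most care, since \textsc{DeRefExp}, \textsc{DeRefIndex}, \textsc{DeRefArray}, \textsc{DeRefHash}, \textsc{RefRes} and \textsc{DeRefRes} all superficially match a term $e_1[e_2]$: one must confirm that ``$e_1$ is a resource type name'' versus ``$e_1$ is a non-value array/hash expression'' versus ``$e_1$ is a literal array or hash value'' versus ``$e_1$ is a reference value $t[w]$'' are pairwise disjoint and cover every case in which the term steps, and that within each the index guard (``$e_2$ is a value'' or not) then selects a unique rule. The omitted appendix rules (sequential composition, \kif/\kunless/\kcase, the arithmetic and Boolean operators, array and hash construction, selectors) all follow the same value-guarded, left-to-right, first-match-wins discipline, so they contribute only more of the same.
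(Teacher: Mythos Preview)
Your proposal is correct and takes the same approach as the paper, namely induction on derivations; the paper's own proof is a single sentence (``Straightforward by induction on derivations''), and your write-up simply spells out the case analysis and auxiliary determinism facts that make that induction go through.
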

\begin{proof}
  Straightforward by induction on derivations.
\end{proof}

Second, in \muPuppet, evaluation is monotonic in the sense that:
\begin{itemize}
\item Once a variable binding is defined in $\sigma$, its value never changes,
  and it remains bound until the end of the scope in which it was bound.
\item Once a class or resource definition is defined in $\kappa$, its
  definition never changes, except that a class's definition may
  change from $\ClassDef(c_{opt},\rho,s)$ to $\Declared(\beta)$.
\item Once a resource is declared in $\CV$, its title, properties and
  values never change.
\end{itemize}
We can formalise this as follows.
\begin{definition}
  We define orderings $\sqsubseteq$ on variable environments,
  definition environments and catalogs as follows:
  \begin{itemize}
  \item $\sigma \sqsubseteq \sigma'$ when  $x \in dom(\sigma_\alpha)$ implies
    that either $\sigma_\alpha(x) =
    \sigma'_\alpha(x)$ or $\alpha = \defScope{\beta}$ for some $\beta$
    and $x \not\in dom(\sigma'_\alpha)$.
\item $\kappa \sqsubseteq \kappa'$ when $a \in dom(\kappa)$ implies
  either 
  $\kappa(a) = \kappa'(a)$ or $\kappa(a) = \ClassDef(c_{opt},\rho,s)$
  and $\kappa'(a) = \Declared(\beta)$.
\item $\CV \sqsubseteq \CV'$ when there exists $\CV''$ such that $\CV\visiblespace
  \CV'' = \CV'$.
\item $(\sigma,\kappa,\CV) \sqsubseteq (\sigma',\kappa',\CV')$ when
  $\sigma \sqsubseteq \sigma'$, $\kappa \sqsubseteq \kappa'$ and $\CV
  \sqsubseteq \CV'$.
  \end{itemize}
\end{definition}
\begin{theorem}
  [Monotonicity]
The statement and manifest evaluation relations of \muPuppet are
monotonic in $\sigma,\kappa,\CV$:
\begin{itemize}
  \item If $\sigma,\kappa,\CV, s \stepStmt{\alpha}
    \sigma',\kappa',\CV' , s'$ then $(\sigma,\kappa,\CV) \sqsubseteq
    (\sigma',\kappa',\CV')$.
  \item If
    $\sigma,\kappa,\CV, s \stepManifest{N} \sigma',\kappa',\CV' , m'$
then $(\sigma,\kappa,\CV) \sqsubseteq
    (\sigma',\kappa',\CV')$.
\end{itemize}
\end{theorem}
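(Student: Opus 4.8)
The plan is to prove both statements simultaneously by induction on the structure of the derivations of $\sigma,\kappa,\CV, s \stepStmt{\alpha} \sigma',\kappa',\CV' , s'$ and $\sigma,\kappa,\CV, s \stepManifest{N} \sigma',\kappa',\CV' , m'$. The two judgements are mutually recursive (the manifest rule \textsc{TopScope} invokes the statement judgement, and statement rules such as \textsc{ScopeStep} recursively invoke themselves), so a single induction on derivation height covering both relations is the natural framing. Since $\sqsubseteq$ is reflexive and transitive (a fact I would note in passing, as it follows immediately from the definitions), the congruence/step rules that merely evaluate a subexpression or a sub-statement — \textsc{AssignStep}, \textsc{ResStep}, \textsc{DefStep}, \textsc{CDecStep}, \textsc{ScopeStep}, \textsc{DefScopeStep}, and the sequential-composition and conditional rules in the appendix — are handled uniformly: expression evaluation leaves $\sigma,\kappa,\CV$ unchanged, so those premises contribute nothing, and any statement premise gives the needed $\sqsubseteq$ by the induction hypothesis.

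The substantive work is in the handful of rules that actually modify one of the three components, and for each I would check the relevant clause of the definition of $\sqsubseteq$. For \textsc{Assign}, the side condition $x \notin \dom(\sigma_\alpha)$ is exactly what is needed: extending $\sigma$ at a fresh $(\alpha,x)$ gives $\sigma \sqsubseteq \fupdate{\sigma}{\alpha}{x}{v}$, since every previously-bound $(\beta,y)$ retains its value. For \textsc{ResDecl} (and the built-in/defined resource cases that append $\RV$ to the catalog), $\CV \sqsubseteq \CV \seq \RV$ holds by taking $\CV'' = \RV$ in the catalog ordering. For the class-declaration rules \textsc{IncU}, \textsc{IncPD}, \textsc{CDecU}, \textsc{CDecPD}, the only change to $\kappa$ is rewriting the entry for $a$ from $\ClassDef(c_{opt},\rho,s)$ to $\Declared(\beta)$, which is precisely the permitted transition in the definition of $\kappa \sqsubseteq \kappa'$; no other entry changes, and $\sigma$ and $\CV$ are untouched. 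For the manifest-level rules \textsc{RDef}, \textsc{CDef}, \textsc{CDefI}, \textsc{CDefP}, \textsc{CDefPI}, the definition environment is extended at a fresh identifier (the side condition $u \notin \dom(\kappa)$ or $a \notin \dom(\kappa)$), which trivially preserves $\sqsubseteq$, and again $\sigma,\CV$ are unchanged; \textsc{NodeMatch} and \textsc{NodeNoMatch} change nothing and reduce to reflexivity.

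The one rule that requires genuine care, and which I expect to be the main obstacle, is \textsc{DefScopeDone}: it replaces $\sigma$ by $\clear(\sigma,\defScope{\alpha})$, which \emph{removes} bindings. This is exactly why the first clause of the definition of $\sqsubseteq$ has the escape hatch ``$\alpha = \defScope{\beta}$ for some $\beta$ and $x \notin \dom(\sigma'_\alpha)$'': I would verify that clearing scope $\defScope{\alpha}$ leaves every binding in any other scope intact (so the first disjunct holds there), while every binding that was in $\defScope{\alpha}$ satisfies the second disjunct. Care is also needed with \textsc{IncPU}, \textsc{IncPD} and the \textsc{CDecP*} rules, where the reduct contains a residual $\mathtt{include}\ b$ or $\mathtt{include}\ a$ or $\mathtt{class}\ \{a:\HV\}$: these rewrite the \emph{statement} but not $\sigma,\kappa,\CV$, so they reduce to reflexivity — the point worth stating explicitly is that although a class $a$ may be ``revisited,'' its $\kappa$-entry is only ever changed in the base-case rules, never reverted. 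With \textsc{DefScopeDone} dispatched via the designed-in exception and the revisiting rules observed to be identity-on-environments, the remaining cases are routine.
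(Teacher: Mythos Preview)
Your proposal is correct and follows the same approach as the paper: induction on derivations, with the observation that only the rules modifying $\sigma$, $\kappa$, or $\CV$ require checking, and in each such case the relevant clause of $\sqsubseteq$ applies directly. The paper's own proof is a one-line sketch (``Straightforward by induction.\ The only interesting cases are the rules in which $\sigma$, $\kappa$ or $\CV$ change; in each case the conclusion is immediate''), so your careful enumeration of the modifying rules---and in particular your explicit treatment of \textsc{DefScopeDone} via the $\defScope{\beta}$ escape clause in the definition of $\sigma \sqsubseteq \sigma'$---is simply a fuller unpacking of exactly what the paper leaves implicit.
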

\begin{proof}
  Straightforward by induction.  The only interesting cases are the
  rules in which $\sigma$, $\kappa$ or $\CV$ change; in each
  case the conclusion is immediate.
\end{proof}

These properties are not especially surprising or difficult to prove;
nevertheless, they provide some justification for calling
\muPuppet a `declarative' language.  However, \muPuppet does not satisfy some
other desirable properties.  For example, as we have seen, the order in which
variable definitions or resource or class declarations appear can affect the
final result.  Likewise, there is no notion of `well-formedness' that
guarantees that a
\muPuppet program terminates successfully: compilation may diverge or
encounter a run-time error.  Furthermore, full Puppet does not satisfy
monotonicity, because of other non-declarative features that we have chosen
not to model.  Further work is needed to identify and prove desirable
properties of the full Puppet language, and identify subsets of (or
modifications to) Puppet that make such properties valid.

\section{Implementation and Evaluation}
\label{sec:impl}

We implemented a prototype parser and evaluator \muPuppet in Haskell (GHC
8.0.1).  The parser accepts source syntax for features of
\muPuppet as described in the Puppet documentation and produces
abstract syntax trees as described in Section~\ref{sec:semantics}. The
evaluator implements \muPuppet compilation based on the rules shown in
\appref{rules}.  The implementation constitutes roughly 1300 lines of Haskell
code.  The evaluator itself is roughly 400 lines of code, most of which are
line-by-line translations of the inference rules.
  
We also implemented a test framework that creates an Ubuntu 16.04.1 (x86\_64)
virtual machine with Puppet installed, and scripts which run each example
using both \muPuppet and Puppet and compare the resulting messages and catalog
output.

\subsection{Test cases and results}

During our early investigations with Puppet, we constructed a test set of 52
manifests that illustrate Puppet's more unusual features, including resources,
classes, inheritance, and resource type definitions.  The tests include
successful examples (where Puppet produces a catalog) and failing examples
(where Puppet fails with an error); we found both kinds of tests valuable for
understanding what is possible in cases where the documentation was
unspecific.

\begin{table}[tb]
  \centering
\begin{tabular}{lccc}
Feature & \#Tests & \#Pass & \#Unsupported\\
\hline
Statements & 11 & 11 & 0 \\ % @FEATURE statements
~~Assignment & 2 & 2 & 0 \\ % @FEATURE assignment
~~Case & 1 & 1 & 0 \\ % @FEATURE case
~~If & 4 & 4 & 0 \\ % @FEATURE if
~~Unless & 4 & 4 & 0 \\ %FEATURE unless
\hline
Resources & 18 & 11 & 7 \\ % @FEATURE resources
~~Basics & 2 & 2 & 0 \\ %@FEATURE resource_basics
~~Variables & 3 & 3 & 0 \\% @FEATURE variables
~~User defined resource types & 5 & 5 & 0 \\ % @FEATURE userdef
~~\textit{Virtual resources} & 1 & 0 & 1 \\ % @FEATURE virtual_resources
~~\textit{Default values} & 1 & 0 & 1 \\ % @FEATURE defaults
~~\textit{Resource extension} & 4 & 0 & 4\\ % @FEATURE
                                % resource_extension
~~\textit{Ordering Constraints} & 2 & 1 & 1 \\ % @FEATURE ordering
\hline
Classes & 32 & 22 & 10 \\  % @FEATURE classes
~~Basics & 4 & 4 & 0 \\ % @FEATURE class_basics
~~Inheritance & 3 & 3 & 0 \\ % @FEATURE class_inheritance
~~Scope & 2 & 2 & 0 \\ % @FEATURE class_scope
~~Variables \& classes & 6 & 6 & 0 \\ % @FEATURE class_variables
~~Class Parameters & 6 & 6 & 0 \\ % @FEATURE class_parameters
~~\textit{Overriding} & 5 & 0 & 5 \\ % @FEATURE class_overriding
~~\textit{Nesting and redefinition} & 6 & 1 & 5 \\ % @FEATURE class_nesting
\hline
Nodes & 8 & 8 & 0 \\ % @FEATURE nodes

\hline
\textit{Resource Collectors} & 9 & 0 & 9 \\ % @FEATURE collectors
~~\textit{Basics} & 1 & 0 & 1 \\ % @FEATURE collector_basics
~~\textit{Collectors, references \& variables} & 3 & 0 & 3 \\ % @FEATURE collector_variables
~~\textit{Application order} & 5 & 0 & 5 \\ % @FEATURE collector_order
% \hline
% Other features & - & - & - \\ % @FEATURE other
% ~~Functions & - & - & - \\ % @FEATURE functions
% ~~Nodes & 8 & 8 & 0 \\ % @FEATURE nodes
% ~~Exported resources & - & - & - \\ % @FEATURE exported_resources
% ~~Looping \& iteration & - & - & - \\ % @FEATURE iteration
% ~~Lambdas & - & - & - \\ % @FEATURE lambdas
\hline
\end{tabular}

\caption{Summary of test cases. Features in \textit{italics} are not
  supported in \muPuppet.}
\label{tab:summary}
\end{table}

We used these test cases to guide the design of \muPuppet, and developed 16
additional test cases along the way to test corner cases or clarify behaviour
that our rules did not originally capture correctly.  We developed further
tests during debugging and to check the behaviour of Puppet's (relatively)
standard features, such as conditionals and case statements, arrays, and
hashes.  We did not encounter any surprises there so we do not present these
results in detail.

We summarise the test cases and their results in Table~\ref{tab:summary}.  The
``Feature'' column describes the classification of features present in our
test set.  The ``\#Tests'' and ``\#Pass'' columns show the number of tests in
each category and the number of them that pass.  A test that is intended to
succeed passes if both Puppet and
\muPuppet succeed and produce the same catalog (up to reordering of
resources); a test that is intended to fail passes if both Puppet and
\muPuppet fail.  The ``\#Unsupported'' column shows the number of test
cases that involve features that \muPuppet does not handle.
All of the tests either pass or use features that are not supported
by \muPuppet.   Features that \muPuppet (by design) does not support
are italicised.  

All of the examples listed in the above table are included in the
supplementary material, together with the resulting catalogs and error
messages provided by Puppet.

\subsection{Other Puppet examples}

A natural source of test cases is Puppet's own test suite or, more
generally, other Puppet examples in public repositories.  Puppet does
have a test suite, but it is mostly written in Ruby to test internal
functionality.  We could find only 43 Puppet language tests in the
Puppet repository on
GitHub\footnote{https://github.com/puppetlabs/puppet/tree/master/spec/fixtures/unit/parser/lexer}.
These tests appear to be aimed at testing parsing and lexing
functionality; they are not accompanied by descriptions of the desired
catalog result.  Some of the tests also appear to be out of date: five
fail in Puppet 4.8.  Of the remaining test cases that Puppet 4.8 can
run, 20 run correctly in \muPuppet (with minor modifications) while 18
use features not yet implemented in \muPuppet.

We also considered harvesting realistic Puppet configurations from
other public repositories; however, this is not straightforward since
real configurations typically include confidential or
security-critical parameters so are not publicly available.  An
alternative would be to harvest Puppet modules from publicly available
sources such as PuppetForge, which often include test manifests to
show typical usage.  However, these test cases usually do not come with sample
results; they are mainly intended for illustration.  

We examined the top 10 Puppet modules (apache, ant, concat, firewall,
java, mysql, ntp, postgresql, puppetdb, and stdlib) on the official
PuppetForge module site and searched for keywords and other symbols in
the source code to estimate the number of uses of Puppet features such
as classes, inheritance, definitions, resource collectors/virtual
resources, and ordering constraints.  Classes occurred in almost all
modules, with over 200 uses overall, and over 50 uses of inheritance.
Resource type definitions were less frequent, with only around 40
uses, while uses of resource collectors and virtual resources were rare: there
were only 10 uses overall, distributed among 5 packages.  Ordering
constraints were widely used, with over 90 occurrences in 8 packages.
Due to the widespread use of ordering constraints, as well as other
issues such as the lack of support for general strings and string
interpolation in \muPuppet, we were not able to run \muPuppet on these
Puppet modules. This investigation suggests that to develop tools or
analyses for real Puppet modules based on \muPuppet will require both
conceptual steps (modelling ordering constraints and non-declarative
features such as resource collectors) as well as engineering effort
(e.g. to handle Puppet's full, idiosyncratic string interpolation
syntax).

\subsection{Unsupported features}
\label{sec:implementation:discussion}

Our formalisation handles some but not all of the distinctive features of
Puppet. As mentioned in the introduction, we chose to focus effort on the
well-established features of Puppet that appear closest to its declarative
aspirations.  In this section we discuss the features we chose not to support
and how they might be supported in the future, in increasing order of
complexity.

\Paragraph{String interpolation} Puppet supports a rich set of
string operations including string interpolation (i.e. variables and other
expression forms embedded in strings).  For example, writing
\verb|"Hello ${planet['earth']}!"|
produces \verb|"Hello world!"| if variable \verb|planet| is a hash
whose \verb|'earth'| key is bound to \verb|'World'|. String
interpolation is not conceptually difficult but it is widely used and
desugaring it correctly to plain string append operations is an engineering challenge.

\Paragraph{Dynamic data types} Puppet 4 also supports type annotations, which
are checked dynamically and can be used for automatic validation of
parameters. For example, writing \verb|Integer $x = 5| in a parameter list
says that $x$ is required to be an integer and its default value is 5.  Types
can also express constraints on the allowed values: for example,
\verb|5 =~ Integer[1,10]| is a valid expression that evaluates to
\verb|true| because 5 is an integer between 1 and 10.  Data types are
themselves values and there is a type \verb|Type| of data types.

\Paragraph{Undefined values and strict mode} By default, Puppet treats
an undefined variable as having a special ``undefined value''
\verb|undef|.  Puppet provides a ``strict'' mode that treats an
attempt to dereference an undefined variable as an error.  We have focused on
modelling strict semantics, so our rules get stuck if an attempt is made to
dereference an undefined variable; handling explicit undefined values seems
straightforward, by changing the definitions of lookup and related operations
to return \verb|undef| instead of failing.

\Paragraph{Functions, iteration and lambdas} As of version 4, Puppet
allows function definitions to be written in Puppet and also includes
support for iteration functions (\verb|each|, \verb|slice|,
\verb|filter|, \verb|map|, \verb|reduce|, \verb|with|) which take
lambda blocks as arguments.  The latter can only be used as function
arguments, and cannot be assigned to variables, so Puppet does not yet have
true first-class functions.  We do see no immediate obstacle to handling these
features, using standard techniques.

\Paragraph{Nested constructs and multiple definitions} 
We chose to consider only top-level definitions of classes and defined
resources, but Puppet allows nesting of these constructs, which also makes it
possible for classes to be defined more than once.  For example:
\begin{lstlisting}
class a {
  $x1 = "a"
  class b {
    $y1 = "b"
  }
}

class a::b {
  $y2 = "ab"
}
include a
include a::b
\end{lstlisting}%$
Surprisingly, \emph{both} line 4 and line 9 are executed (in
unspecified order) when \verb|a::b| is declared, so both
\verb|$::a::b::y1| and \verb|$::a::b::y2| are in scope at the
end.  Our impression is that it would be better to simply reject Puppet
manifests that employ either nested classes or multiple definitions, since
nesting of class and resource definitions is explicitly discouraged by the
Puppet documentation.

\Paragraph{Dynamically-scoped resource defaults}
Puppet also allows setting resource defaults. For example one can
write (using the capitalised resource type \verb|File|):
\begin{lstlisting}
  File { owner => "alice" }
\end{lstlisting}
to indicate that the default owner of all files is \verb|alice|. Defaults can
be declared in classes, but unlike variables, resourced defaults are
dynamically scoped; for this reason, the documentation and some authors both
recommend using resource defaults sparingly.  Puppet 4 provides an alternative
way to specify defaults as part of the resource declaration.

\Paragraph{Resource extension and overriding}
In Puppet, attributes can be added to a resource which has been previously
defined by using a \emph{reference} to the resource, or removed by setting
them to \verb|undef|.
\begin{lstlisting}
class main {
  file { "file": owner => "alice" }
  File["file"] { mode => "0755" }
}
\end{lstlisting}
However, it is an error to attempt to change the value of an
already-defined resource, unless the updating operation is performed
in a \emph{subclass} of the class in which the resource was originally
declared.  For example:
\begin{lstlisting}
class main::parent {
  file { "file":
    owner => "bob",
    source => "the source"
  }
}
class main inherits main::parent {
  File["file"] {
    owner => "alice",
    source => undef
  }
}
\end{lstlisting}
This illustrates that code in the derived class is given special permission to
override any resource attributes that were set in the base class.  Handling
this behaviour seems to require (at least) tracking the classes in which
resources are declared.

\Paragraph{Resource collectors and virtual resources}
\emph{Resource collectors} allow for
selecting, and also updating, groups of resources specified via predicates.
For example, the following code declares a resource and then immediately uses
the collector
\verb"File <|title == "file"|>"
to modify its parameters.  
\begin{lstlisting}
class main {
	file { "file": owner => "alice" }
	File <| title == "file" |> {
		owner => "bob",
		group => "the group",
	}  
}
\end{lstlisting}
Updates based on resource collectors are unrestricted, and the scope of the
modification is also unrestricted: so for example the resource collector
\verb"File<|owner='root'|>" will select all files owned by root, and
potentially update their parameters in arbitrary ways.  The Puppet
documentation recommends using resource collectors only in idiomatic ways,
e.g. using the title of a known resource as part of the predicate. Puppet also
supports \emph{virtual resources}, that is, resources with parameter values
that are not added to the catalog until declared or referenced elsewhere.
Virtual resources allow a resource to be declared in one place without the
resource being included in the catalog. The resource can then be {\em
realised} in one or more other places to include it in the catalog. Notice
that you can realise virtual resources before declaring them:
\begin{lstlisting}
class main {
	realize User["alice"]
	@user { "alice": uid => 100 }
	@user { "bob": uid => 101 }
	realize User["alice"]
}
\end{lstlisting}

 As Shambaugh et
al.~\cite{shambaugh16pldi} observe, these features can have global
side-effects and make separate compilation impossible; the Puppet
documentation also recommends avoiding them if possible.  We have not
attempted to model these features formally, and doing so appears to be a
challenging future direction.

\Paragraph{Ordering constraints} By default, Puppet does not guarantee
to process the resources in the catalog in a fixed order.  To provide
finer-grained (and arguably more declarative) control over ordering,
Puppet provides several features: special \emph{metaparameters} such
as \verb|ensure|, \verb|require|, \verb|notify|, and \verb|subscribe|,
\emph{chaining arrows} \verb|->| and \verb|~>| that declare dependencies
among resources, and the \emph{require function} that includes a class
and creates dependencies on its resources.  From the point of
view of our semantics, all of these amount to ways to define
dependency edges among resources, making the catalog into a
\emph{resource graph}. Puppet represents the chaining arrow dependencies using 
metaparameters, so we believe this behaviour can be handled using techniques
similar to those for resource parameter overrides or resource collectors.  The
rules for translating the different ordering constraints to resource graph
edges can be expressed using Datalog rules~\cite{shambaugh16pldi} and this
approach may be adaptable to our semantics too.

\section{Related work}
\label{sec:related}

Other declarative configuration frameworks include LCFG~\cite{lcfgbook}, a
configuration management system for Unix, and
SmartFrog~\cite{goldsack09sigops}, a configuration language developed by HP
Labs. Of these, only SmartFrog has been formally specified; Herry and
Anderson~\cite{DBLP:journals/jnsm/AndersonH16} propose a formal semantics and
identify some complications, including potential termination problems
exhibited by the SmartFrog interpreter. Their semantics is presented in a
denotational style, in contrast to the small-step operational semantics
presented here for Puppet. Other systems, such as Ponder~\cite{damianou02phd},
adopt an operational approach to policies for distributed systems.

Beyond this, there are relatively few formal studies of configuration
languages, and we are aware of only two papers on Puppet specifically.
Vanbrabant \etal~\cite{vanbrabant11lisa} propose an access control technique
for an early version of Puppet based on checking whether the changes to the
catalog resulting from a change to the manifest are allowed by a policy.
Catalogs are represented as XML files and allowed changes are described using
path expressions.  Shambaugh
\etal~\cite{shambaugh16pldi} present a configuration verification tool
for Puppet called Rehearsal. Their tool is concerned primarily with the
``realisation'' stage of a Puppet configuration, and focuses on the problem of
determinacy analysis: that is, determining whether a proposed reconfiguration
leads to a unique result state. Shambaugh
\etal consider a subset of Puppet as a source language,
including resources, defined resources, and dependencies. However,
some important subtleties of the semantics were not investigated.
Compilation of definitions and ordering
constraints was described at a high level but not formalised; classes and
inheritance were not mentioned, although their implementation handles
simple cases of these constructs.
Our work complements Rehearsal: Rehearsal analyses the determinacy of the realisation
stage, while our work improves understanding of the compilation stage.

The present work continues a line of recent efforts to study the semantics of
programming and scripting languages ``in the wild''. There have been efforts
to define semantics for JavaScript~\cite{maffeis08aplas,guha10ecoop},
R~\cite{morandat12ecoop}, PHP~\cite{filaretti14ecoop}, and
Python~\cite{politz13oopsla}.  Work on formal techniques for
Ruby~\cite{ueno14aplas} may be especially relevant to Puppet: Puppet is
implemented in Ruby, and plugins can be written in Ruby, so modelling the
behaviour of Puppet as a whole may require modelling both the Puppet
configuration language and the Ruby code used to implement plugins, as well as
other tools such as Hiera\footnote{\url{https://docs.puppet.com/hiera/}.} that
are an increasingly important component of the Puppet toolchain. However,
Puppet itself differs significantly from Ruby, and Puppet ``classes'' in
particular bear little relation to classes in Ruby or other object-oriented
languages.

\section{Conclusions}
\label{sec:concl}

Rigorous foundations for configuration frameworks are needed to improve the
reliability of configurations for critical systems. Puppet is a popular
configuration framework, and is already being used in safety-critical domains
such as air traffic control.\footnote{\url{https://archive.fosdem.org/2011/schedule
/event/puppetairtraffic.html}.}

Even if each individual component of such a system is formally verified,
misconfiguration errors can still lead to failures or vulnerabilities, and the
use of these tools at scale means that the consequences of failure are also
potentially large-scale. The main contribution of this paper is an operational
semantics for a subset of Puppet, called \muPuppet, that covers the
distinctive features of Puppet that are used in most Puppet configurations,
including resource, node, class, and defined resource constructs. Our rules
also model Puppet's idiosyncratic treatment of classes, scope, and
inheritance, including the dynamic treatment of node scope.

We presented some simple metatheoretic properties that justify our
characterisation of \muPuppet as a `declarative' subset of Puppet, and we
compared \muPuppet with the Puppet 4.8 implementation on a number of examples.
We also identified idiosyncrasies concerning evaluation order and scope where
our initial approach differed from Puppet's actual behaviour. Because Puppet
is a work in progress, we hope that these observations will contribute to the
evolution and improvement of the Puppet language. In future work, we plan to
investigate more advanced features of Puppet and develop semantics-based
analysis and debugging techniques; two natural directions for future work are
investigating Puppet's recently-added type system, and developing
\emph{provenance} techniques that can help explain where a catalog
value or resource came from, why it was declared, or why manifest
compilation failed~\cite{anderson12tapp}.

\paragraph*{Acknowledgments}
Fu was supported by a Microsoft Research PhD studentship. Perera and
Cheney were supported by the Air Force Office of Scientific Research,
Air Force Material Command, USAF, under grant number
FA8655-13-1-3006. The U.S. Government and University of Edinburgh are
authorised to reproduce and distribute reprints for their purposes
notwithstanding any copyright notation thereon. Perera was also
supported by UK EPSRC project EP/K034413/1.  We also gratefully
acknowledge Arjun Guha for comments on an early version of this paper
and Henrik Lindberg for discussions about Puppet's semantics and
tests.

\bibliography{paper}
\appendix
\section{Glossary}
\label{app:glossary}

In this section we provide short definitions of certain terms as they
are used in Puppet, for reference while reading the paper.

\begin{description}
\item[agent] Client software (Puppet agent) running on each node, responsible
for liaising with the Puppet master.
\item[catalog] A collection of \textsl{resources}, together with \textsl{ordering
  constraints} forming a directed acyclic graph among the resources.
\item[class] A named collection of resources.  Classes can take
  parameters and their contents can be processed as a result of an
  \textsl{include-like declaration} (\verb|include a|) or a
  \textsl{resource-like declaration} (\verb|class { a : ...}|).
  Declaring a class more than once has no effect; however, once a
  class has been declared with parameters (using a resource-like
  declaration), it cannot be redeclared (with possibly different parameters) again.
\item[compile] To process a \textsl{manifest} in order to construct the
  \textsl{catalog} for a given \textsl{node}. Compilation in Puppet is closer
  is what is usually called \emph{evaluation} in programming languages
  terminology, because an expression-like form (the manifest) is
  ``normalised'' into a value-like form (the catalog).
\item[compilation error] An error that arises during the compilation of a
Puppet manifest; a compilation error in Puppet is what would normally be
described as a \emph{runtime} error, since only errors that arise in code that
is actually executed are reported.
\item[data type] An expression such as \verb|Integer| or
  \verb|Integer[1,10]| that can be used for run-time type checking or
  parameter validation.
\item[declare] To assert the existence of, and describe the parameters
  of, a resource.  When a built-in resource is declared, it is added to
  the \textsl{catalog}.  When a user-defined resource or class is declared, its
  body is processed.
\item[define] The definition of class or defined resource type is the
  program construct that names, lists the parameters, and gives the
  body of the construct.
\item[environment] Group of Puppet nodes that share configuration settings.
Useful for testing.
\item[fact(s)] Information about the \textsl{node} that the
  \textsl{agent} collects, which is passed to the manifest compiler
  as the value of a special hash-valued variable \verb|$facts|.
\item[hash] A dictionary consisting of key-value pair bindings; keys
  may be `scalar' values such as integers and strings.
\item[include-like declaration] Syntax for \textsl{declaring} a \textsl{class}
  using a special function such as \verb|include|, \verb|contain|,
  \verb|require|, or \verb|hiera_include|.
\item[manifest] The Puppet source code file(s) containing code in the
  Puppet configuration language, which is \textsl{compiled} to produce
  a \textsl{catalog} for a given \textsl{node}.
\item[node] A computer system (physical or virtual) that is managed by
  Puppet, identified by its hostname. Communicates with Puppet master via
Puppet agent software.
\item[node scope] The scope of the active node definition, typically
containing node-specific overrides of global definitions.
\item[ordering constraints] Constaints on the order in which resources
  are processed.  These constraints can be defined explicitly using
  \verb|->|, or implicitly using \textsl{metaparameters} such as
  \verb|require| and \verb|before|.  Constraints can also be annotated
  as conveying information among resources, using \verb|~>| or the
  \textsl{metaparameters} \verb|notify| and \verb|subscribe|.
\item[package] A particular Puppet resource type, representing a package on a
  \textsl{node}, abstracting over the particular operating system or
  package manager in use on that system.  
\item[resource] A representation of a system component of a \textsl{node}
  that is managed by Puppet; for example, a user account, file, package, or
  webserver.  A resource has a \textsl{type} (such as \verb|file| or
  \verb|user|) and \textsl{title} string.  A \textsl{catalog} can contain at most one
  resource with a given type and title.
\item[resource collector] An expression \verb!T<|pred|>! that collects all resources of
  a given type \verb|T| whose parameters satisfy a predicate \verb|pred|.  Resource
  collectors can also be used to override resource parameter values.
  Sometimes called the \textsl{spaceship operator}.
\item[resource-like declaration] Syntax for declaring a
  \textsl{class} that resembles the syntax used to declare resources.
  Resource-like declarations are the only way to override the default
  values of parameters or provide values for parameters that have no
  defaults.
\item[resource type] The type of a resource. Puppet \puppetversion also contains
  \textsl{data types} that classify values of expressions.
\item[top scope] The root namespace of a Puppet manifests; parent of any node
scope.
\end{description}

\section{Features supported}
\label{app:features}
\newcommand*{\twoColumnHeading}[1]{\multicolumn{2}{c}{\textbf{\emph{#1}}}}
\begin{figure}
\centering
\begin{tabular}{lc}
Puppet \puppetversion feature
&
Modelled?
\\
\twoColumnHeading{Built-in operators}
\\
Comparison operators (\tt{==}, \tt{!=}, \tt{>=}, \tt{<=}, \tt{>}, and \tt{<})
&
\Yes
\\
Boolean operators ({\tt{and}, \tt{or}, \tt{!}})
&
\Yes
\\
Arithmetic operators (\tt{+}, \tt{-}, \tt{*}, \tt{/}, \tt{\%}, \tt{<}\tt{<}, \tt{>}\tt{>})
&
\Yes
\\
Array and hash operators (\tt{*}, \tt{<}\tt{<}, \tt{+}, \tt{-})
&
\No
\\
Assignment (\tt{=})
&
\Yes
\\
\twoColumnHeading{Other core features}
\\
Conditional forms (\tt{if}, \tt{unless}, \tt{case}, \tt{?})
&
\Yes
\\
Node definitions (\tt{node})
&
\Yes
\\
Comments
&
\Yes
\\
Facts and pre-set variables
&
\No
\\
\twoColumnHeading{Data types}
\\
Strings, Numbers \& Booleans
&
\Yes
\\
Arrays
&
\Yes
\\
Hashes
&
\Yes
\\
Regular expressions
&
\No
\\
\tt{Sensitive}
&
\No
\\
\tt{undef}
&
\No
\\
Resource references
&
\Yes
\\
Resource data types
&
\No
\\
\tt{default}
&
\Yes
\\
Data type annotations and values of type \tt{Type}
&
\No
\\
\twoColumnHeading{Resources}
\\
Built-in resource types
&
\Yes
\\
Defined resource types (\tt{define})
&
\Yes
\\
Multiple resource bodies; per-expression default attributes
&
\No
\\
Set attributes from hash (\tt{* =>})
&
\No
\\
Abstract resource types (\tt{Resource[\ldots]})
&
\No
\\
Multiple resource titles
&
\No
\\
Add attributes to existing resources
&
\No
\\
\twoColumnHeading{Classes}
\\
Class parameters
&
\Yes
\\
Inheritance
&
\Yes
\\
\kw{include} statement
&
\Yes
\\
Other multiple-usage mechanisms (\kw{require}, \kw{contain}, \kw{hiera\_include})
&
\No
\\
Resource-like declarations
&
\Yes
\\
\twoColumnHeading{Relationships and ordering}
\\
Relationship metaparameters (\tt{before}, \tt{require}, \tt{notify}, \tt{subscribe})
&
\No
\\
Chaining arrows (\tt{->})
&
\No
\\
\tt{require} function
&
\No
\\
\twoColumnHeading{Advanced constructs}
\\
Plugins (Ruby-level functions)
&
\No
\\
Puppet-level functions
&
\No
\\
Templates
&
\No
\\
Iteration functions (\tt{each}, \tt{slice}, \tt{filter}, \tt{map}, \tt{reduce})
&
\No
\\
Lambdas (\tt{|\ldots| \{\ldots\}})
&
\No
\\
Advanced resource features (defaults, collectors, virtual resources, exported resources)
&
\No
\\
Tags
&
\No
\\
Run stages
&
\No
\end{tabular}
\caption{Summary of Puppet 4.7 language coverage}
\label{fig:language-coverage}
\end{figure}

Figure~\ref{fig:language-coverage} summarises our coverage of the Puppet \puppetversion
language.

\section{Operational semantics}
\label{app:rules}
\subsection{Environment operations}

\begin{itemize}
\item The operation $\lookup$ returns the first value associated with a
  parameter in a hash, or $\bot$ if the parameter is not a key in the
  hash.  ($Scalar$ is the set of possible hash keys, i.e. integers or strings.)
\[\begin{array}{rcll}
\lookup&:&\Set{Scalar}\times \Set{HashValue}\rightarrow \Set{Value} \uplus\{ \bot\}\\
\lookup(k,(k'\Rightarrow v, \HV))&=&
					\lookup(k,\HV)&\text{ if } k\neq k'\\
\lookup(k,(k\Rightarrow v, \HV)) &=&					v\\
\lookup(k,\varepsilon) &=& 					\bot
\end{array}
\]

\item The function $\lookupCat$ returns an attribute value of a resource value
in a catalog $\CV$ if the resource has the type $t$, the title $w$ and the
attribute $x$. It returns $\bot$ if there is no such a resource value.

\[\begin{array}{rcll}
\lookupCat&:&\Set{\CV}\times \Set{String}\times \Set{String}\times \Set{String}\rightarrow \Set{Value} \uplus\{ \bot\}\\
\lookupCat(( \CV\visiblespace t:\{w: \HV\}), t', w', x)&=&
					\lookupCat(\CV, t', w', x) \\
					&&\text{ if } t\neq t' \text{or } w\neq w'  \\
\lookupCat((\CV\visiblespace t:\{w: \HV\}), t, w, x) &=&					\lookup(x,\HV)\\
\lookupCat(\varepsilon,t,w,x) &=& 					\bot
\end{array}
\]

\item The partial function $\merge$ takes a list of parameters with optional
  default expressions, and a hash mapping parameter names to overriding
  values, and returns a statement initialising each parameter to its default
  or overridden value.

\[
\begin{array}{rcll}
\merge&: & \Set{Params}\times \Set{HashValue} \rightarrow \Set{Stmt}\\
\merge(\varepsilon,\HV) &=& \kskip\\
\merge((\$x,\rho), \HV) &=& \$x=v\visiblespace\merge(\rho,\HV) & \text{if }
                           v=\lookup(x,\HV)\neq\bot\\
\merge((\$x=e,\rho),\HV) &=& \$x=v\visiblespace\merge(\rho,\HV)& \text{if }v=\lookup(x,\HV)\neq\bot\\
\merge((\$x=e,\rho),\HV) &=& \$x=e\visiblespace\merge(\rho,\HV)& \text{if }
                           \lookup(x,\HV)=\bot
\end{array}
\]

\noindent Note that $\merge((\$x, \rho), \HV)$ is undefined unless $x$ is a key of $\HV$.

\item Update $\fupdate{\sigma}{\alpha}{x}{v}$ add a new
  variable-value pair to a variable environment. It is defined as follows:
\[\begin{array}{rcll}
\fupdate{-}{-}{-}{-} & :& \Set{Env}\times \Set{Scope}\times \Set{Var}\times \Set{Value}\to \Set{Env}\\
\fupdate{\sigma}{\alpha}{x}{v} &=&\lambda(\alpha', y).\ \text{if }
\alpha=\alpha' \text{ and } x=y \text{ then } v \text{ else }
\sigma(\alpha', y)
\end{array}\]

where $\Set{Env} = \Set{Scope} \times \Set{Var} \to \Set{Value}$.

\item Specialisation $\sigma_\alpha$ is the variable environment $\sigma$
specialised to a particular scope $\alpha$, defined as follows:
\[\begin{array}{rcll}
-_-&:& \Set{Var} \to \Set{Value}\\
\sigma_{\alpha}&=&\lambda x. \sigma(\alpha,x)
\end{array}\]

\item Clearing $\clear(\sigma,\alpha)$ makes all variables
  in scope $\alpha$ undefined, leaving all other variable bindings
  unchanged.  It is used to clean up $\defScope{\alpha}$ scopes at the
  end of their lifetime.  This operation is defined as follows:
\[\begin{array}{rcl}
\clear&:&\Set{Env}\times \Set{Scope} \to \Set{Env}\\
\clear(\sigma,\alpha) &=&
					\lambda(\alpha',
                          y).\left\{\begin{array}{ll}
\sigma(\alpha', y) & \text{ if } \alpha
                                                  \neq \alpha'\\
\text{undefined} & \text{otherwise}\end{array}\right.

\end{array}\]

\item The \emph{base scope} of a scope $\alpha$ in the context of a
  definition environment $\kappa$ is defined as follows:

\item The parent of a scope $\alpha$ in the context of a definition environment
$\kappa$ is defined as follows:

\end{itemize}

\subsection{Expressions ($\sigma, \kappa , \CV,e \stepExpr{\alpha} e'$)}

\subsubsection{Variables}

\subsubsection{Arithmetic expressions}

\begin{mathpar}
  \inferrule*[right=ArithLeft]
  {
    \sigma,\kappa, \CV,e_{1}\stepExpr{\alpha} e_{1}'
  }
  {
    \sigma,\kappa, \CV,e_{1}+e_{2}\stepExpr{\alpha} e_{1}'+e_{2}
  }
  \and
  \inferrule*[right=ArithRight]
  {
    \sigma,\kappa,\CV, e\stepExpr{\alpha} e'
  }
  {
    \sigma,\kappa, \CV,i+e\stepExpr{\alpha} i+e'
  }
  \and
  \inferrule*[right=ArithValue]
  {
  }
  {
    \sigma,\kappa, \CV,i_{1}+i_{2}\stepExpr{\alpha}  i_1 +_{\mathbb{Z}} i_2
  }
\end{mathpar}

Here, $+_\mathbb{Z}$ stands for the usual integer addition function. The
evaluation rules for $-,\ast,/$ are similar to those for $+$, and omitted.

\subsubsection{Comparison expressions}

\begin{mathpar}
  \inferrule*[right=CompLeft]
  {
    \sigma,\kappa, \CV,e_{1}\stepExpr{\alpha} e_{1}'
  }
  {
    \sigma,\kappa, \CV,e_{1}> e_{2}\stepExpr{\alpha} e_{1}'> e_{2}
  }
  \and
  \inferrule*[right=CompRight]
  {
    \sigma,\kappa, \CV,e\stepExpr{\alpha} e'
  }
  {
    \sigma,\kappa,\CV, v > e\stepExpr{\alpha} v > e'
  }
  \\
  \inferrule*[right=CompValueI]
  {
    v_{1} >_{\mathbb{Z}} v_{2}
  }
  {
    \sigma,\kappa, \CV,v_{1}> v_{2}\stepExpr{\alpha} \ktrue
  }
  \and
  \inferrule*[right=CompValueII]
  {
    v_{1} \leq_{\mathbb{Z}} v_{2}
  }
  {
    \sigma,\kappa, \CV,v_{1}> v_{2}\stepExpr{\alpha} \kfalse
  }
\end{mathpar}

The rules for other comparison operators are similar and omitted.

\subsubsection{Boolean expressions}

\begin{mathpar}
  \inferrule*[right=AndLeft]
  {
    \sigma,\kappa, \CV,e_{1}\stepExpr{\alpha} e_{1}'
  }
  {
    \sigma,\kappa, \CV,e_{1} \kand e_{2}\stepExpr{\alpha} e_{1}'\kand e_{2}
  }
  \and
  \inferrule*[right=AndRightI]
  {
    \sigma,\kappa, \CV,e \stepExpr{\alpha} e'
  }
  {
    \sigma,\kappa, \CV,\kfalse \kand e \stepExpr{\alpha} \kfalse
  }
  \and
  \inferrule*[right=AndRightII]
  {
    \sigma,\kappa, \CV,e \stepExpr{\alpha} e'
  }
  {
    \sigma,\kappa, \CV,\ktrue \kand e \stepExpr{\alpha} \ktrue \kand e'
  }
  \and
  \inferrule*[right=AndValue]
  {
    \strut
  }
  {
    \sigma, \kappa, \CV,\ktrue \kand \ktrue \stepExpr{\alpha} \ktrue
  }
  \and
  \inferrule*[right=NotStep]
  {
    \sigma,\kappa, \CV,e\stepExpr{\alpha} e'
  }
  {
    \sigma,\kappa, \CV,\Not{e} \stepExpr{\alpha} \Not{e'}
  }
  \and
  \inferrule*[right=NotValueI]
  {
  }
  {
    \sigma,\kappa, \CV,\Not{\ktrue} \stepExpr{\alpha} \kfalse
  }
  \and
  \inferrule*[right=NotValueII]
  {
  }
  {
    \sigma,\kappa, \CV,\Not{\kfalse} \stepExpr{\alpha} \ktrue
  }
\end{mathpar}

The rules for disjunction are analogous and are omitted.

\subsubsection{Array}

\begin{mathpar}
\inferrule*[right=ArrExp]
 {
    \sigma, \kappa, \CV, A\xrightarrow{\alpha} A'
  }
  {
    \sigma,\kappa, \CV, [A] \xrightarrow{\alpha} [A']
  }
  \and
  \inferrule*[right=ArrEleI]
 {
    \sigma, \kappa, \CV, A \xrightarrow{\alpha} A'
  }
  {
    \sigma,\kappa, \CV, v, A \xrightarrow{\alpha} v, A'
  }
  \and
 \inferrule*[right=ArrEleII]
 {
    \sigma, \kappa, \CV, e_{1} \xrightarrow{\alpha} e_{1}'
  }
  {
    \sigma,\kappa, \CV, e_{1},A \xrightarrow{\alpha} e_{1}', A
  }
\end{mathpar}

\subsubsection{Hash}

\begin{mathpar}
\inferrule*[right=HaExp]
 {
    \sigma,\kappa, \CV, H\xrightarrow{\alpha} H'
  }
  {
    \sigma,\kappa,\CV,  \{H\} \xrightarrow{\alpha} \{H\}
  }
  \and
  \inferrule*[right=HEleI]
 {
    \sigma,\kappa, \CV, H\xrightarrow{\alpha} H'
  }
  {
    \sigma,\kappa,\CV,  \mathtt{k}\Rightarrow v , H\xrightarrow{\alpha} \mathtt{k}\Rightarrow v, H
  }
  \and
 \inferrule*[right=HEleII]
 {
   \sigma,\kappa, \CV, e\xrightarrow{\alpha} e'
  }
  {
    \sigma,\kappa, \CV, \mathtt{k} \Rightarrow e, H\xrightarrow{\alpha} \mathtt{k} \Rightarrow e', H
  }
 \end{mathpar}

\subsubsection{Selectors}

The predicate $\caseMatch$ abstracts over the details of matching values
against selector cases. Real Puppet checks any \kdefault clause last, failing
with an exception if there is no case which matches; for simplicity we omit
these details from the formalism.
 
\begin{mathpar}
  \inferrule*[right=SControl]
  {
    \sigma,\kappa , \CV,e \stepExpr{\alpha} e'
  }
  {
    \sigma,\kappa , \CV,e \;\query\{M\} \stepExpr{\alpha} e' \;\query\{M\}
  }
  \and
  \inferrule*[right=SCase]
  {
    \sigma,\kappa, \CV,e_1 \stepExpr{\alpha} e_1'
  }
  {
    \sigma,\kappa, \CV,v\;\query\{e_{1}\Rightarrow e, M\}\stepExpr{\alpha} v\;\query\{e_{1}'\Rightarrow e, M\}
  }
  \and
  \inferrule*[right=SChooseI]
  {
    \caseMatch(v,v_1)
  }
  {
    \sigma,\kappa, \CV,v\;\query\{v_{1}\Rightarrow e, M\}\stepExpr{\alpha} e
  }
  \and
  \inferrule*[right=SChooseII]
  {
    \neg\caseMatch(v,v_{1})
  }
  {
    \sigma,\kappa, \CV,v\;\query\{v_{1}\Rightarrow e, M\}\stepExpr{\alpha} v\;\query\{M\}
  }
  \and
  \inferrule*[right=SDefault]
  {
  }
  {
    \sigma,\kappa, \CV,v\;\query\{\kdefault\Rightarrow e, M  \}\stepExpr{\alpha} e
  }
\end{mathpar}

\subsubsection{Dereference}

\subsection{Resources ($\sigma,\kappa , H
  \stepHash{\alpha}H' $ and $\sigma,\kappa , e:H \stepRes{\alpha}
  e':H'$)}

\begin{mathpar}
  \inferrule*[right=ResStepII]
  {
    \sigma,\kappa, \CV,e\stepExpr{\alpha} e'
  }
  {
    \sigma,\kappa, \CV,x \Rightarrow e, H\stepHash{\alpha} x \Rightarrow e', H
  }
  \and
  \inferrule*[right=ResStepIII]
  {
    \sigma,\kappa,\CV, H\stepHash{\alpha} H'
  }
  {
    \sigma,\kappa, \CV,x \Rightarrow v , H\stepHash{\alpha} x \Rightarrow v, H'
  }
  \\
  \and
  \inferrule*[right=ResTitle]
  {
    \sigma,\kappa, \CV,e \stepExpr{\alpha}e'
  }
  {
    \sigma,\kappa, \CV,e:H\stepRes{\alpha} e':H
  }
  \and
  \inferrule*[right=ResStepI]
  {
    \sigma,\kappa, \CV,H\stepHash{\alpha}H'
  }
  {
    \sigma,\kappa, \CV,e:H\stepRes{\alpha} e:H'
  }
\end{mathpar}

\subsection{Statements ($\sigma,\kappa,\CV , s
  \stepStmt{\alpha} \sigma',\kappa',\CV' , s'$)}

\subsubsection{Expression statements}

An expression can occur as a statement. Its value is ignored.

\begin{mathpar}
  \inferrule*[right=ExprStep]
  {
\sigma,\kappa,\CV , e \stepExpr{\alpha} e'
  }
  {
    \sigma,\kappa,\CV , e \stepStmt{\alpha} \sigma,\kappa,\CV, e'
  }
\and
  \inferrule*[right=Expr]
  {
  }
  {
    \sigma,\kappa,\CV , v \stepStmt{\alpha} \sigma,\kappa,\CV, \kskip
  }
\end{mathpar}

\subsubsection{Sequential composition}

\begin{mathpar}
  \inferrule*[right=SeqStep]
  {
    \sigma,\kappa,\CV, s_1 \stepStmt{\alpha} \sigma',\kappa',\CV' , s_1'
  }
  {
    \sigma,\kappa,\CV, s_1\visiblespace s_2
    \stepStmt{\alpha}
    \sigma',\kappa',\CV' , s_1'\visiblespace s_2
  }
  \and
  \inferrule*[right=SeqSkip]
  {
  }
  {
    \sigma,\kappa,\CV, \kskip\visiblespace s
    \stepStmt{\alpha}
    \sigma,\kappa, \CV, s
  }
\end{mathpar}

\subsubsection{Assignment}

% TODO: This treats $\$x$ like a variable, whereas its actual meaning
% inside a class seems to be closer to a field.  We need to add $\$x$ to
% the appropriate scope for the context in which this assignment appears.

\subsubsection{If}

\begin{mathpar}
  \inferrule*[right=IfStep]
  {
    \sigma,\kappa , \CV, e\stepExpr{\alpha} e'
  }
  {
    \sigma,\kappa, \CV, \kif\ e\;\{s_1\}  \ \kelse\ \{s_2\}\stepStmt{\alpha}\sigma,\kappa,\CV ,\text{\kif\
  }e'\;\{s_1\} \ \kelse\ \{s_2\}
  }
  \and
  \inferrule*[right=IfT]
  {
  }
  {
    \sigma,\kappa,\CV , \kif\; \ktrue\ \{s_1\} \ \kelse\ \{s_2\}\stepStmt{\alpha} \sigma,\kappa,\CV , s_1
  }
  \and
  \inferrule*[right=IfF]
  {
  }
  {
    \sigma,\kappa,\CV , \kif\; \kfalse\ \{s_1\} \ \kelse\ \{s_2\}\stepStmt{\alpha} \sigma,\kappa,\CV , s_2
  }
\end{mathpar}

\subsubsection{Unless}

\begin{mathpar}
  \inferrule*[right=UnlessStep]
  {
    \sigma,\kappa, \CV, e\stepExpr{\alpha} e'
  }
  {
    \sigma,\kappa, \CV, \mathtt{unless}\; e\; \{s
  \}\stepStmt{\alpha}\sigma,\kappa,\CV , \mathtt{unless}\; e' \;\{s\}
  }
  \and
  \inferrule*[right=UnlessT]
  {
  }
  {
    \sigma,\kappa,\CV, \kunless\; \ktrue \;\{s \}\stepStmt{\alpha} \sigma,\kappa,\CV,\;\mathtt{skip}
  }
  \and
  \inferrule*[right=UnlessF]
  {
  }
  {
    \sigma,\kappa,\CV, \kunless\; \kfalse \;\{s \}\stepStmt{\alpha} \sigma,\kappa,\CV,\;s
  }
\end{mathpar}

\subsubsection{Case}

As with selectors, the predicate $\caseMatch$ abstracts over the details of
pattern-matching. Again, in real Puppet the \kdefault case is always checked
last, regardless of the order of the cases.

\begin{mathpar}
  \inferrule*[right=CaseStep1]
  {
    \sigma,\kappa, \CV, e\stepExpr{\alpha} e'
  }
  {
    \sigma,\kappa,\CV, \mathtt{case}\; e\ \{C \}\stepStmt{\alpha}\sigma,\kappa,\CV, \mathtt{case}\; e'\ \{C\}
  }
  \and
  \inferrule*[right=CaseStep2]
  {
    \sigma,\kappa, \CV, e\stepExpr{\alpha} e'
  }
  {
    \sigma,\kappa,\CV, \mathtt{case}\; v\ \{e:\{s\}\visiblespace C \}\stepStmt{\alpha}\sigma,\kappa,\CV, \mathtt{case}\; v\ \{e':\{s\}\visiblespace C \}
  }
  \and
  \inferrule*[right=CaseMatch]
  {
    \caseMatch(v,v_1)
  }
  {
    \sigma,\kappa,\CV, \mathtt{case}\; v\ \{v_{1}:\{s\}\visiblespace C \}\stepStmt{\alpha}\sigma,\kappa,\CV,
  s
  }
  \and
  \inferrule*[right=CaseNoMatch]
  {
    \neg\caseMatch(v,v_1)
  }
  {
    \sigma,\kappa,\CV, \mathtt{case}\; v\ \{v_{1}:\{s\}\visiblespace C
  \}\stepStmt{\alpha}\sigma,\kappa,\CV, \mathtt{case}\; v\
  \{C \}
  }
  \and
  \inferrule*[right=CaseDone]
  {
  }
  {
    \sigma,\kappa,\CV, \mathtt{case}\; v\ \{ \varepsilon \}\stepStmt{\alpha}\sigma,\kappa,\CV, \kskip
  }
\end{mathpar}

\subsubsection{Resource declarations}

% In real Puppet, resources are expressions and evaluating them has the
% side effect of adding to the catalog, but the final value is a reference to the resource.
% In \muPuppet, we just treat them as statements and ignore the return
% value, since we are not yet modelling resource references.

\subsubsection{Defined resource types}

\subsubsection{Include}

\subsubsection{Resource-like class declarations}

\subsubsection{Scope}

\subsection{Manifests ($\sigma,\kappa,\CV , m \stepManifest{N} \sigma',\kappa',\CV' ,
m'$)}

\subsubsection{Top-Level Statements}

\subsubsection{Sequential Composition}

\begin{mathpar}
  \inferrule*[right=MSeqStep]
  {
    \sigma,\kappa,\CV , m_{1} \stepManifest{N}
    \sigma',\kappa',\CV' , m_{1}'
  }
  {
    \sigma,\kappa,\CV , m_{1}\visiblespace m_{2} \stepManifest{N} \sigma',\kappa',\CV'
    , m_{1}'\visiblespace m_{2}
  }
  \and
  \inferrule*[right=MSeqSkip]
  {
  }
  {
    \sigma,\kappa,\CV , \mathtt{skip}\visiblespace m \stepManifest{N} \sigma,\kappa,\CV
    , m
  }
\end{mathpar}

\subsubsection{Node Definitions}

The predicate $\nodeMatch$ abstracts over the details of matching values
against node specifications.

\subsubsection{Defined resource types}

\subsubsection{Class Definitions}

\end{document}